\renewcommand{\orcidID}[1]{\orcidlink{#1}}
\newcommand{\symset}[1]{\mathcal{#1}}
\renewcommand{\us}{\symset{U}}
\renewcommand{\vs}{\symset{V}}
\newcommand{\fgs}{\symset{F\!G}}
\newcommand{\ffs}{\symset{F}}
\newcommand{\ggs}{\symset{G}}
\newcommand{\XS}{\symset{X}}
\newcommand{\YS}{\symset{Y}}
\newcommand{\ZS}{\symset{Z}}
\newcommand{\SSS}{\symset{S}}
\newcommand{\nlit}[1]{\f{lit}(#1)}
\newcommand{\nclause}[1]{\f{clause}(#1)}
\newcommand{\nside}[1]{\f{side}(#1)}
\newcommand{\ntgt}[1]{\f{tgt}(#1)}
\newcommand{\nipol}[1]{\f{ipol}(#1)}
\newcommand{\npath}[2]{\f{path}_{#1}(#2)}
\newcommand{\ncopy}[1]{\f{copy}(N)}
\newcommand{\cnf}[1]{\f{cnf}(#1)}
\newcommand{\dnf}[1]{\f{dnf}(#1)}
\newcommand{\INVC}[1]{\f{INV}_{\f{C}}(#1)}
\newcommand{\INVD}[1]{\f{INV}_{\f{D}}(#1)}
\newcommand{\INVX}[1]{\f{INV}_{\f{X}}(#1)}
\newcommand{\MC}{M_{\f{C}}}
\newcommand{\MD}{M_{\f{D}}}
\newcommand{\dual}[1]{\f{dual}(#1)}
\newcommand{\prenum}{1.0em}
\newcommand{\univ}{U\xspace}
\newcommand{\vgtrred}{VGT-range-restricted\xspace}
\newcommand{\vgtrrion}{VGT-range-restriction\xspace}
\newcommand{\unirred}{\univ-range-restricted\xspace}
\tikzset{
itria/.style={
  draw,
  solid,
  thin,
  isosceles triangle,
  isosceles triangle apex angle=60,
  shape border rotate=90,yshift=-6.8ex}
}
\newcommand{\tableauscale}{0.7}
\newcommand{\extabld}{10ex}
\newcommand{\rewrite}{\Rightarrow}
\newcommand{\rewritereg}{\stackrel{\!\!\scriptsize{simp}\!\!\raisebox{-2pt}{\rule{0pt}{1pt}}}{\Rightarrow}}
\newcommand{\algoinput}{\smallskip \noindent\textsc{Input: }}
\newcommand{\algooutput}{\smallskip \noindent\textsc{Output: }}
\newcommand{\algomethod}{\smallskip \noindent\textsc{Method: }}
\newcommand{\algoskip}{\vspace{2pt}}
\newcommand{\nbadlits}[1]{\f{bad\hyph literals}(#1)}
\newcommand{\ncode}[1]{\f{path\hyph string}(#1)}
\definecolor{tcolbbbbg}{rgb}{0.8,0.8,0.8}
\definecolor{tcolaaabg}{rgb}{1.0,1.0,1.0}
\newcommand{\taaa}[1]{\colorbox{tcolaaabg}{$#1\vbar$}}
\newcommand{\tbbb}[1]{\colorbox{tcolbbbbg}{$#1\vbar$}}
\newcommand{\vbar}{\raisebox{-0.60ex}{\rule{0pt}{2.35ex}}}
\renewcommand{\taaa}[1]{\colorbox{tcolaaabg}{$#1\vbar$}}
\renewcommand{\tbbb}[1]{\colorbox{tcolbbbbg}{$#1\vbar$}}
\newcommand{\tbbbtxt}[1]{\colorbox{tcolbbbbg}{#1}}
\newcommand{\nhphantom}[1]{\sbox0{#1}\hspace*{-\the\wd0}}
\newcommand{\nannot}[1]%
           {\hspace{0.2em}{[}#1{]}\nhphantom{\hspace{0.2em}{[}#1{]}}}
\newcommand{\nannotw}[1]%
           {\hspace{0.6em}{[}#1{]}\nhphantom{\hspace{0.6em}{[}#1{]}}}
\newcommand{\nannotmm}[1]%
           {\hspace{0.2em}\underline{{[}#1{]}}\nhphantom{\hspace{0.2em}\underline{{[}#1{]}}}}
\newcommand{\nannotwmm}[1]%
           {\hspace{0.6em}\underline{{[}#1{]}}\nhphantom{\hspace{0.6em}\underline{{[}#1{]}}}}
\newcommand{\m}[1]{\mathit{#1}}
\newcommand{\varfun}{\m{\mathcal{V}\hspace{-0.11em}ar}}
\newcommand{\var}[1]{\varfun(#1)}
\newcommand{\vall}[1]{\var{#1}}
\newcommand{\vpos}[1]{\varfun^{+}(#1)}
\newcommand{\vneg}[1]{\varfun^{-}(#1)}
\renewcommand{\VV}{\mathcal{V}}
\newcommand{\tmaxfun}{\hyph\m{\mathcal{M}\hspace{-0.11em}ax}}
\newcommand{\vtmax}[1]{\VV\tmaxfun({#1})}
\newcommand{\vtmaxpos}[1]{\VV\tmaxfun^{+}(#1)}
\newcommand{\vtmaxneg}[1]{\VV\tmaxfun^{-}(#1)}
\newcommand{\xtmax}[1]{\tmaxfun({#1})}
\newcommand{\xtmaxpos}[1]{\tmaxfun^{+}(#1)}
\newcommand{\xtmaxneg}[1]{\tmaxfun^{-}(#1)}
\newcommand{\voc}[1]{\m{\mathcal{V}\hspace{-0.11em}oc}^\pm{(#1)}}
\newcommand{\VG}{\mathcal{U}}
\newcommand{\VF}{\mathcal{E}}
\newcommand{\VU}{\mathcal{U}}
\newcommand{\VE}{\mathcal{E}}
\newcommand{\VX}{\mathcal{X}}
\newcommand{\VC}{\mathcal{C}}
\newcommand{\pred}[1]{\m{\mathcal{P}\hspace{-0.11em}red}^\pm(#1)}
\newcommand{\fun}[1]{\m{\mathcal{F}\hspace{-0.18em}un}(#1)}
\newcommand{\aaa}{\red{\f{red}}}
\newcommand{\bbb}{\blue{\f{blue}}}
\renewcommand{\aaa}{\f{F}}
\renewcommand{\bbb}{\f{G}}
\newcommand{\LL}{\f{F}}
\newcommand{\RR}{\f{G}}
\newcommand{\sided}{two-sided\xspace}
\newcommand{\FL}{F}
\renewcommand{\GR}{G}
\newcommand{\CTIF}{CTIF\xspace}
\newcommand{\HG}{H_{\textsc{grd}}}
\newcounter{ibcounter}
\newcommand{\sterms}{\text{-terms}}
\newcommand{\Vampire}{\textsf{Vampire}\xspace}
\newcommand{\EProver}{\textsf{E}\xspace}
\newcommand{\Princess}{\textsf{Princess}\xspace}
\newcommand{\ProverN}{\textsf{Prover9}\xspace}
\newcommand{\leanCoP}{\textsf{leanCoP}\xspace}
\newcommand{\CMProver}{\textsf{CMProver}\xspace}
\newcommand{\SETHEO}{\textsf{SETHEO}\xspace}
\newcommand{\PIE}{\textsf{PIE}\xspace}
\newcommand{\Prooftrans}{\textsf{Prooftrans}\xspace}
\newcommand{\FC}{F'}
\newcommand{\NGC}{G'}
\newcommand{\ipctx}{\la F, G, \FC, \NGC, \ffs, \ggs, \VF, \VG, \VC, \VV\ra}
\newcommand{\ipctxprefix}{\la F, G, \FC, \NGC\ra}
\newcommand{\supplementarynote}{Proofs of nontrivial claims that are not
proven in the body of the paper are supplemented in the appendix}
\newcommand{\appref}[1]{App.~\ref{#1}}
\begin{document}

\title{Range-Restricted and Horn Interpolation through Clausal
  Tableaux\thanks{Funded by the Deutsche Forschungsgemeinschaft (DFG, German
    Research Foundation) -- Project-ID~457292495. The work was supported by
    the North-German Supercomputing Alliance (HLRN).}}

\titlerunning{Range-Restricted Interpolation through Clausal Tableaux}

\author{Christoph Wernhard\orcidID{0000-0002-0438-8829}}
\authorrunning{C. Wernhard}

\institute{University of Potsdam
  \email{info@christophwernhard.com}}

\maketitle

\begin{abstract}
We show how variations of range-restriction and also the Horn property can be
passed from inputs to outputs of Craig interpolation in first-order logic. The
proof system is clausal tableaux, which stems from first-order ATP. Our
results are induced by a restriction of the clausal tableau structure, which
can be achieved in general by a proof transformation, also if the source proof
is by resolution/paramodulation. Primarily addressed applications are query
synthesis and reformulation with interpolation. Our methodical approach
combines operations on proof structures with the immediate perspective of
feasible implementation through incorporating highly optimized first-order
provers.
\end{abstract}

\section{Introduction}

We show how variations of range-restriction and also the Horn property can be
passed from inputs to outputs of Craig interpolation in first-order logic.
The primarily envisaged application field is synthesis and reformulation of
queries with interpolation \cite{nash:2010,toman:wedell:book,benedikt:book}.
Basically, the sought target query~$R$ is understood there as the right side
of a definition of a given query~$Q$ within a given background knowledge
base~$K$, i.e., it holds that $K \entails (Q \equi R)$, where the vocabulary
of $R$ is in a given set of permitted target symbols.
In first-order logic, the formulas~$R$ can be characterized as the Craig
interpolants of $K \land Q$ and $\lnot K' \lor Q'$, where $K,Q$ are copies of
$K',Q'$ with the symbols not allowed in $R$ replaced by fresh symbols
\cite{craig:uses}. Formulas $R$ exist if and only if the entailment $K \land Q
\entails \lnot K' \lor Q'$ holds. They can be constructed as Craig
interpolants from given proofs of the entailment in a suitable calculus.

In databases and knowledge representation, syntactic fragments of first-order
logic ensure desirable properties, for example domain independence. Typically,
for given $K$ and $Q$ in some such fragment, also $R$ must be in some specific
fragment to be usable as a query or as a knowledge base component. Our work
addresses this by showing for certain such fragments how membership is passed
on to interpolants and thus to the constructed right sides of definitions.
The fragment in focus here is a variant of range-restriction from
\cite{vgt}, known as a rather general syntactic condition to ensure domain
independence \cite[p.~97]{foundations}. It permits conversion into a shape
suitable for ``evaluation'' by binding free and quantified variables
successively to the members of given predicate extensions. Correspondingly, if
the vocabulary is relational, a range-restricted formula can be translated
into a relational algebra expression. First-order representations of
widely-used classes of integrity constraints, such as tuple-generating
dependencies, are sentences that are range-restricted in the considered sense.

As proof system we use \name{clausal tableaux}
\cite{letz:clausal:deduktion,letz:habil,handbook:tableaux:letz,handbook:ar:haehnle,letz:stenz:handbook},
devised in the 1990s to take account of automated first-order provers that may
be viewed as enumerating tree-shaped proof structures, labeled with instances
of input clauses.\footnote{Alternate accounts and views are provided by model
  elimination \cite{loveland:1978} and the connection method
  \cite{bibel:atp:1982,bibel:otten:2020}.} Such systems include the Prolog
Technology Theorem Prover \cite{pttp}, \SETHEO \cite{setheo:92}, \mbox{\leanCoP}
\cite{otten:2003:leancopinabstract,leancop} and \CMProver
\cite{cw-mathlib,cw:pie:2016,cw:pie:2020,rwzb:lemmas:2023}. As shown in
\cite{cw:ipol}, a \emph{given} closed clausal tableau is quite well-suited as
a proof structure to extract a Craig interpolant. Via the translation of a
resolution deduction tree \cite{chang:lee} to a clausal tableau in cut normal
form \cite{letz:habil,cw:ipol} this transfers also to interpolation from a
given re\-so\-lution/para\-mo\-dulation proof.

Since the considered notion of range-restriction is based on prenexing and
properties of both a CNF and a DNF representation of the formula, it fits
well with the common first-order ATP setting involving Skolemization and
clausification and the ATP-oriented interpolation on the basis of clausal
tableaux, where in a first stage the propositional structure of the
interpolant is constructed and in a second stage the quantifier prefix.

Our strengthenings of Craig interpolation are induced by a specific
restriction of the clausal tableau structure, which we call \name{hyper},
since it relates to the proof structure restrictions of hyperresolution
\cite{robinson:hyper} and hypertableaux \cite{hypertableaux}. However, it is
considered here for tree structures with rigid variables. A proof
transformation that converts an arbitrary closed clausal tableau to one with
the hyper property shows that the restriction is w.l.o.g. and, moreover,
allows the prover unhampered search for the closed clausal tableaux or
resolution/paramodulation proof underlying interpolation.

\paragraph{Structure of the Paper.}

Section~\ref{sec-prelim} summarizes preliminaries, in particular interpolation
with clausal tableaux \cite{cw:ipol}. Our main result on strengthenings of
Craig interpolation for range-restricted formulas is developed in
Sect.~\ref{sec-ipol-rr}. Section~\ref{sec-horn} discusses Craig interpolation
from a Horn formula, also combined with range-restriction. The proof
transformation underlying these results is introduced in
Sect.~\ref{sec-hyper-convert}. We conclude in Sect.~\ref{sec-conclusion} with
discussing related work, open issues and perspectives.

\supplementarynote.
An implementation with the
\PIE environment
\cite{cw:pie:2016,cw:pie:2020}\footnote{http://cs.christophwernhard.com/pie}
is in progress.

\section{Notation and Preliminaries}
\label{sec-prelim}

\subsection{Notation}

We consider \defname{formulas} of first-order logic. An \defname{NNF formula}
is a quantifier-free formula built up from \defname{literals} (atoms or
negated atoms), truth-value constants $\true, \false$, conjunction and
disjunction. A \defname{CNF formula}, also called \defname{clausal formula},
is an NNF formula that is a conjunction of disjunctions (\defname{clauses}) of
literals. A \defname{DNF formula} is an NNF formula that is a disjunction of
conjunctions (\defname{conjunctive clauses}) of literals.
The complement of a literal~$L$ is denoted by $\du{L}$.
An occurrence of a subformula in a formula has positive (negative)
\defname{polarity}, depending on whether it is in the scope of an even (odd)
number of possibly implicit occurrences of negation.
Let $F$ be a formula. $\var{F}$ is set of its free variables. $\vpos{F}$
($\vneg{F}$) is the set of its free variables with an occurrence in an atom
with positive (negative) polarity. $\fun{F}$ is the set of functions occurring
in it, including constants, regarded here throughout as 0-ary functions.
$\pred{F}$ is the set of pairs $\la p, \mathit{pol}\ra$, where $p$ is a
predicate and $\mathit{pol} \in \{{+},{-}\}$, such that an atom with
predicate~$p$ occurs in $F$ with the polarity indicated by $\mathit{pol}$.
$\voc{F}$ is $\fun{F} \cup \pred{F}$. A \defname{sentence} is a formula
without free variables. An NNF is \defname{ground} if it has no variables. If
$S$ is a set of terms, we call its members $S$-terms. The $\entails$ symbol
expresses semantic entailment.

\subsection{Clausal First-Order Tableaux}
\label{sec-tableaux}

A \defname{clausal tableau} (briefly \name{tableau}) \defname{for} a clausal
formula $F$ is a finite ordered tree whose nodes~$N$ with exception of the
root are labeled with a literal $\nlit{N}$, such that for each node~$N$ the
disjunction of the literals of all its children in their left-to-right order,
$\nclause{N}$, is an instance of a clause in~$F$.
A branch of a tableau is \defname{closed} iff it contains nodes with
complementary literals. A node is \defname{closed} iff all branches through it
are closed. A tableau is \defname{closed} iff its root is closed.
A node is \defname{closing} iff it has an ancestor with complementary literal.
With a closing node~$N$, a particular such ancestor is associated as
\defname{target of}~$N$, written~$\ntgt{N}$.
A tableau is \defname{regular} iff no node has an ancestor with the same
literal and is \defname{leaf-closing} iff all closing nodes are leaves.
A closed tableau that is leaf-closing is called \defname{leaf-closed}. Tableau
simplification can convert any tableau to a regular and leaf-closing tableau
for the same clausal formula, closed if the original tableau is so.
Regularity is achieved by repeating the following operation
\cite[Sect.~2.1.3]{letz:habil}: Select a node~$N$ with an ancestor that has
the same literal, remove the edges originating in the parent of $N$ and
replace them with the edges originating in $N$. The leaf-closing property is
achieved by repeatedly selecting an inner node $N$ that is closing and
removing the edges originating in $N$.
All occurrences of variables in (the literal labels of) a tableau are free and
their scope spans the whole tableau. That is, we consider
\name{free-variable tableaux} \cite[p.~158ff]{handbook:tableaux:letz} with
\name{rigid} variables \cite[p.~114]{handbook:ar:haehnle}. A tableau without
variables is called \defname{ground}.
The universal closure of a clausal formula~$F$ is unsatisfiable iff there
exists a closed clausal tableau for~$F$. This holds also if \name{clausal
  tableau} is restricted by the properties \name{ground}, \name{regular} and
\name{leaf-closing} in arbitrary combinations.

\subsection{Interpolation with Clausal Tableaux}
\label{sec-ipol-basic}

Craig's interpolation theorem \cite{craig:linear,craig:2008:road} along with
Lyndon's observation on the preservation of predicate polarities \cite{lyndon}
ensures for first-order logic the existence of \name{Craig-Lyndon
  interpolants}, defined as follows.
Let $F, G$ be formulas such that $F \entails G$.  A \defname{Craig-Lyndon
  interpolant} of $F$ and $G$ is a formula $H$ such that
(1)~$F \entails H$ and $H \entails G$.
(2)~$\voc{H} \subseteq \voc{F} \cap \voc{G}$.
(3)~$\var{H} \subseteq \var{F} \cap \var{G}$.
The perspective of validating an entailment $F \entails G$ by showing
unsatisfiability of $F \land \lnot G$ is reflected in the notion of
\defname{reverse Craig-Lyndon interpolant} of $F$ and $\GR$, defined as
Craig-Lyndon interpolant of $F$ and $\lnot \GR$.

\begin{wrapfigure}[16]{R}{5.3cm}
  \centering
  \vspace{-0.8cm}
  \hspace{-0.2cm}
  \begin{tikzpicture}[scale=0.83,
      sibling distance=9em,level distance=\extabld,
      every node/.style = {transform shape,anchor=mid}]]
      \node (a) {\vbar\textbullet}
      child { node {$\tbbb{\lnot \fr(\fa)}$\nannot{$\fq(\fa)$}}
        child { node {$\tbbb{\lnot \fq(\fa)}$\nannot{$\fq(\fa)$}}
          child { node {$\taaa{\lnot \fp(\fa)}$\nannot{$\false$}}
            child { node {$\taaa{\fp(\fa)}$\nannot{$\false$}} }
          }
          child { node {$\taaa{\fq(\fa)}$\nannot{$\fq(\fa)$}} }
        }
        child { node {$\tbbb{\fr(\fa)}$\nannot{$\true$}}
        }
      };
  \end{tikzpicture}
  \vspace{-2pt}
  \caption{A two-sided clausal tableau.}
  \label{fig-tab-egr}
\end{wrapfigure}

Following \cite{cw:ipol}, our interpolant construction is based on a
generalization of clausal tableaux where nodes have an additional \name{side}
label that is shared by siblings and indicates whether the tableau clause is
an instance of an input clause derived from the formula $F$ or of the formula
$G$ of the statement $F \land \GR \entails \false$ underlying the reverse
interpolant. Thus, a \defname{\sided clausal tableau } \defname{for} clausal
formulas $\FL$ \defname{and} $\GR$ is a tableau for $\FL \land \GR$ whose
nodes~$N$ with exception of the root are labeled additionally with a
\name{side} $\nside{N} \in \{\aaa, \bbb\}$, such that (1)~if $N$ and
$N^\prime$ are siblings, then $\nside{N} = \nside{N^\prime}$; (2)~if $N$ has a
child $N'$ with $\nside{N^\prime} = \aaa$, then $\nclause{N}$ is an instance
of a clause in $\FL$, and if~$N$ has a child $N'$ with $\nside{N^\prime} =
\bbb$, then $\nclause{N}$ is an instance of a clause in $\GR$. We also refer
to the side of the children of a node~$N$ as \defname{side of} $\nclause{N}$.
For $\mathit{side} \in \{\aaa,\bbb\}$ define $\npath{\mathit{side}}{N}\;
\eqdef \bigwedge_{N^\prime \in \mathit{Path} \text{ and } \nside{N^\prime} =
  \mathit{side}} \nlit{N^\prime}$, where $\mathit{Path}$ is the union of the
set of the ancestors of $N$ and $\{N\}$.

Let $N$ be a node of a leaf-closed \sided clausal tableau. The value of
\defname{$\nipol{N}$} is an NNF formula, defined inductively as specified with
the tables below, the left for the base case where $N$ is a leaf, the right
for the case where $N$ is an inner node with children $N_1, \ldots, N_n$.

\medskip

\noindent
\begin{minipage}[t]{0.5\textwidth}\small
  \centering
$\begin{array}{c@{\hspace{1em}}c@{\hspace{1em}}c}
\nside{N} &  \nside{\ntgt{N}} & \nipol{N}\\\midrule
\aaa & \aaa & \false\\[0.5ex]
\aaa & \bbb & \nlit{N}\\[0.5ex]
\bbb & \aaa & \du{\nlit{N}}\\[0.5ex]
\bbb & \bbb & \true
\end{array}$
\end{minipage}%
\begin{minipage}[t]{0.5\textwidth}\small
  \centering
  $\begin{array}{c@{\hspace{1em}}c}
    \nside{N_1} & \nipol{N}\\\midrule
    \aaa & \bigvee_{i=1}^n \nipol{N_i}\\[1ex]
    \bbb & \bigwedge_{i=1}^n \nipol{N_i}
  \end{array}$
\end{minipage}

\begin{examp}
  \label{examp-ground}
   Figure~\ref{fig-tab-egr} shows a \sided tableau for $F = \fp(\fa) \land
   (\lnot \fp(\fa) \lor \fq(\fa))$ and $G = (\lnot \fq(\fa) \lor \fr(\fa))
   \land \lnot \fr(\fa)$. \tbbbtxt{Side $\bbb$} is indicated by gray
   background. For each node the value of $\f{ipol}$, after truth-value
   simplification, is annotated in brackets. The clauses of the tableau are
   $\lnot \fr(\fa)$ and $\lnot \fq(\fa) \lor \fr(\fa)$, which have side
   $\bbb$, and $\lnot \fp(\fa) \lor \fq(\fa)$ and $\fp(\fa)$, which have side
   $\aaa$. If $N$ is the node shown bottom left, labeled with $\fp(\fa)$, then
   $\npath{\aaa}{N} = \lnot \fp(\fa) \land \fp(\fa)$ and $\npath{\bbb}{N} =
   \lnot \fr(\fa) \land \lnot \fq(\fa)$.
\end{examp}

\begin{figure}[p]
\centering\small

\fbox{
\begin{minipage}{0.96\textwidth}
 \algoinput First-order formulas $F$ and $G$ such that $F \entails G$.

 \algomethod
 \begin{enumerate}
 \item \label{step-const} \name{Free variables to placeholder constants.} Let
   $F_c$ and $G_c$ be the sentences obtained from $F$ and $G$ by replacing
   each free variable with a dedicated fresh constant.
 \item \label{step-pre} \name{Skolemization and clausification.} Apply there
   conversion to prenex form and second-order Skolemization independently to
   $F_c$ and to $\lnot G_c$, resulting in disjoint sets of fresh Skolem
   functions $\ffs', \ggs'$, clausal formulas $F',G'$, and sets $\us' =
   \var{F'}, \vs' = \var{G'}$ of variables such that
   \[\begin{array}{r@{\hspace{1em}}l}
   \text{(a)} & F_c \equiv \exists \ffs' \forall \us' F'
   \text{ and } \lnot G_c \equiv \exists \ggs' \forall \vs' G'.\\
   \text{(b)} & \voc{F'} \subseteq \voc{F_c} \cup \ffs' \text{ and }
   \voc{\lnot G'} \subseteq \voc{G_c} \cup \ggs'.\\
   \text{(c)} & \forall \us' \forall \vs' (F' \land G') \entails \false.\\
   \end{array}
   \]
   In case $F'$ or $G'$ contains the empty clause, exit with result $H \eqdef
   \false$ or $H \eqdef \true$, respectively.
 \item \label{step-tab} \name{Tableau computation.} Compute a leaf-closed
   clausal tableau for the clausal formula $F' \land G'$. This can be
   obtained, for example, from a clausal tableaux prover for clausal
   first-order formulas.

 \item \label{step-grounding} \name{Tableau grounding.} Instantiate all
   variables of the tableau with ground terms built up from functions in $F'
   \land G'$ and possibly also fresh functions $\SSS = \SSS_1 \uplus \SSS_2$.
   Observe that the grounded tableau is still a leaf-closed tableau for $F'
   \land G'$.

 \item \label{step-side-assignment} \name{Side assignment.} Convert the ground
   tableau to a \sided tableau for $F'$ and~$G'$ by attaching appropriate
   \name{side} labels to all nodes except the root. This is always possible
   because every clause of the tableau is an instance of a clause in $F'$ or
   in $G'$.

 \item \label{step-extract} \name{Ground interpolant extraction.} Let $\HG$ be
   the value of $\nipol{N_0}$, where $N_0$ is the root of the tableau.

 \item \label{step-lifting} \name{Interpolant lifting.} Let $\ffs \eqdef \ffs'
   \cup (\fun{F} \setminus \fun{G}) \cup \SSS_1$ and let $\ggs \eqdef \ggs'
   \cup (\fun{G} \setminus \fun{F}) \cup \SSS_2$. Let $\fgs$ stand for $\ffs
   \cup \ggs$. An \defname{$\fgs$-maximal occurrence} of an $\fgs$-term in a
   formula is an occurrence that is not within another $\fgs$-term. Let
   $\{t_1, \ldots, t_n\}$ be the set of the $\fgs\sterms$ with an
   $\fgs$-maximal occurrence in $\HG$, ordered such that if $t_i$ is a subterm
   of $t_j$, then $i < j$. Let $\{v_1,\ldots,v_n\}$ be a set of fresh
   variables. For $i \in \{1,\ldots,n\}$ define the quantifiers $Q_i$ as
   $\exists$ if $t_i \in \ffs\sterms$ and as $\forall$ if $t_i \in
   \ggs\sterms$. Let \[H_c \eqdef Q_1 v_1 \ldots Q_n v_n\, \HG',\] where
   $\HG'$ is obtained from $\HG$ by replacing all $\fgs$-maximal occurrences
   of terms $t_i$ with variable~$v_i$, simultaneously for all $i \in
   \{1,\ldots,n\}$.

 \item \label{step-constants-to-vars} \name{Placeholder constants to free
   variables.} Let $H$ be $H_c$ after replacing any constants that were
   introduced in step~\ref{step-const} with their corresponding variables.

 \end{enumerate}

 \algooutput Return $H$, a Craig-Lyndon interpolant of the input formulas $F$
 and~$G$.
 \end{minipage}
}

 \caption{The \CTIF Procedure for Craig-Lyndon Interpolation
   \cite{cw:ipol}.}
 \label{fig-ctif}
\end{figure}

If $N_0$ is the root of a \sided tableaux for clausal \emph{ground} formulas
$F$ and $G$, then $\nipol{N_0}$ is a Craig-Lyndon interpolant of $F$ and
$\lnot G$.\footnote{So far, the interpolation method is a variation
of well-known methods
for sequent systems \cite{smullyan:book:68,takeuti:1987}
and analytic tableaux \cite{fitting:book} when
restricted to propositional formulas.}
The \CTIF (\name{Clausal Tableau Interpolation for First-Order
  Formulas}) procedure (Fig.~\ref{fig-ctif}) \cite{cw:ipol} extends this to a
two-stage \cite{bonacina:15:on,huang:95} (inductive construction and lifting)
interpolation method for full first-order logic. It is complete (yields a
Craig-Lyndon interpolant for all first order formulas~$F$ and $G$ such that $F
\entails G$) under the assumption that the method for tableau computation in
Step~\ref{step-tab} is complete (yields a closed tableau for all unsatisfiable
clausal formulas). Some steps leave room for inter\-polation-specific
heuristics: In step~\ref{step-grounding} the choice of the terms used for
grounding; in step~\ref{step-side-assignment} the choice of the side assigned
to clauses that are an instance of both a clause in $F'$ and a clause in
$G'$; and in step~\ref{step-lifting} the quantifier prefix, which is
constrained just by a partial order.

\begin{examp}
  Let $F \eqdef \forall x\, \fp(x) \land \forall x\, (\lnot \fp(x) \lor
  \fq(x))$ and let $G \eqdef \forall x\, (\lnot \fq(x) \lor \fr(x)) \imp
  \fr(\fa)$. Clausifying $F$ and $\lnot G$ then yields $F' = \fp(x) \land (\lnot
  \fp(x) \lor \fq(x))$ and $G' = (\lnot \fq(x) \lor \fr(x)) \land \lnot
  \fr(\fa)$. The tableau from Fig.~\ref{fig-tab-egr} is a leaf-closed ground
  tableau for $F'$ and $G'$ and we obtain $\fq(\fa)$ as $\HG$. Lifting for
  $\ffs = \{\}$ and $\ggs = \{\fa\}$ yields the interpolant $H = \forall v_1\,
  \fq(v_1)$.
\end{examp}

\begin{examp}
  Let $F \eqdef \forall x \forall y\, \fp(x,\ff(x),y)$ and let $G \eqdef
  \exists x \fp(\fa,x,\fg(x))$. Clausifying yields $F' = \fp(x,\ff(x),y)$ and
  $G' = \lnot \fp(\fa,z,\fg(z))$. We obtain $\fp(\fa,\ff(\fa),\fg(\ff(\fa)))$
  as $\HG$. Lifting is for $\ffs = \{\ff\}$ and $\ggs = \{\fa, \fg\}$ with
  $t_1 = \fa$, $t_2 = \ff(\fa)$ and $t_3 = \fg(\ff(\fa))$. It yields $H =
  \forall v_1 \exists v_2 \forall v_3\, \fp(v_1, v_2, v_3)$.
\end{examp}

\section{Interpolation and Range-Restriction}
\label{sec-ipol-rr}

We now develop our main result on strengthenings of Craig interpolation for
range-restricted formulas.

\subsection{CNF and DNF with Some Assumed Syntactic Properties}
\label{sec-cnf-dnf}

Following \cite{vgt} we will consider a notion of range-restriction defined in
terms of properties of two prenex formulas that are equivalent to the original
formula, have both the same quantifier prefix but matrices in CNF and DNF,
respectively.
Although not syntactically unique, we refer to them functionally as $\cnf{F}$
and $\dnf{F}$ since we only rely on specific -- easy to achieve -- syntactic
properties that are stated in the following
Props.~\ref{prop-cnf-dnf-base}--\ref{prop-cnf-dnf-properties}.
\begin{prop}
  \label{prop-cnf-dnf-base}
  For all formulas $F$ it holds that
  $\var{\cnf{F}} \subseteq \var{F}$;\linebreak
  $\voc{\cnf{F}} \subseteq \voc{F}$;
  $\var{\dnf{F}} \subseteq \var{F}$;
  $\voc{\dnf{F}} \subseteq \voc{F}$.
\end{prop}
For prenex formulas $F$ with an NNF matrix let $\dual{F}$ be the formula
obtained from $F$ by switching quantifiers $\forall$ and $\exists$,
connectives $\land$ and $\lor$, truth-value constants $\true$ and $\false$,
and literals with their complement.
\begin{prop}%
  \label{prop-cnf-dnf-duality}
For all formulas $F$ it holds that $\cnf{F} = \dual{\dnf{\lnot F}}$; $\dnf{F}
\linebreak = \dual{\cnf{\lnot F}}$; $\cnf{\lnot F} = \dual{\dnf{F}}$; $\dnf{\lnot F} =
\dual{\cnf{F}}$.
\end{prop}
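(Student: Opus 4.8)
The plan is to reduce the four identities to the single fact that $\dual{}$ is a syntactic \emph{involution} on prenex formulas with NNF matrix, i.e.\ $\dual{\dual{G}} = G$. This is immediate from the definition: swapping $\forall$ with $\exists$, $\land$ with $\lor$, $\true$ with $\false$, and replacing each literal by its complement are self-inverse operations acting on disjoint kinds of syntactic positions (for the literal part using $\du{\du{L}} = L$). Alongside it I would record the routine ``shape'' facts needed to see that the named representatives can be set up as claimed: $\dual{G}$ is again prenex with NNF matrix; its prefix has the same variables as that of $G$ with all quantifiers switched; its matrix is a CNF iff the matrix of $G$ is a DNF, and vice versa, since $\dual{}$ turns a conjunction of clauses into a disjunction of conjunctive clauses over complementary literals; and $\dual{G} \equiv \lnot G$, by the familiar De~Morgan / quantifier-negation induction.

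With these in hand, fix the representatives as follows. For every formula $F$ choose any prenex $\dnf{F}$ with DNF matrix that is equivalent to $F$ and meets Prop.~\ref{prop-cnf-dnf-base} (such exist by prenexing followed by distribution), coordinated so that $\dnf{\lnot F}$ has the prefix dual to that of $\dnf{F}$ and so that $\dnf{\lnot\lnot F}$ is the very same representative as $\dnf{F}$ (harmless, since NNF conversion deletes double negations); then \emph{define} $\cnf{F} \eqdef \dual{\dnf{\lnot F}}$. By the shape facts, $\cnf{F}$ is prenex with CNF matrix, has the prefix dual to that of $\dnf{\lnot F}$ and hence the prefix of $\dnf{F}$, and $\cnf{F} \equiv \lnot(\dnf{\lnot F}) \equiv \lnot\lnot F \equiv F$; one checks likewise, as for Prop.~\ref{prop-cnf-dnf-base}, that $\dual{}$ introduces no new variables, functions, or predicate--polarity pairs beyond those present in $\lnot F$. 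So this choice is legitimate and simultaneously fixes $\cnf{}$ and $\dnf{}$ for all inputs.

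It then remains to read off the four equations. The first, $\cnf{F} = \dual{\dnf{\lnot F}}$, is the definition. Instantiating it at $\lnot F$ and using $\dnf{\lnot\lnot F} = \dnf{F}$ gives the third, $\cnf{\lnot F} = \dual{\dnf{F}}$. Applying $\dual{}$ to the first and to the third and invoking involutivity of $\dual{}$ yields, respectively, the fourth, $\dnf{\lnot F} = \dual{\cnf{F}}$, and the second, $\dnf{F} = \dual{\cnf{\lnot F}}$. I expect the only genuinely non-mechanical point to be the meta-level one: $\cnf{}$ and $\dnf{}$ name arbitrary-but-fixed representatives rather than canonical functions, so the proposition has to be understood as asserting that representatives \emph{can be} chosen with these identities holding literally — and then its entire content is the definition $\cnf{F} \eqdef \dual{\dnf{\lnot F}}$ together with involutivity of $\dual{}$; everything else (the equivalences, the matrix shapes, the preservation of vocabulary and variables) is shared with the separately stated Props.~\ref{prop-cnf-dnf-base}--\ref{prop-cnf-dnf-properties} and handled the same way.
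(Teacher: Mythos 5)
Your proposal is correct and matches the paper's intent: the paper gives no explicit proof of this proposition, treating it as one of the ``easy to achieve'' syntactic properties of the (non-unique) representatives $\cnf{F}$ and $\dnf{F}$, and your reading --- fix $\dnf{}$, define $\cnf{F} \eqdef \dual{\dnf{\lnot F}}$, and derive all four identities from the involutivity of $\dual{}$ --- is exactly the justification the paper leaves implicit. The one point worth tightening is the vocabulary check: $\dual{}$ flips predicate polarities, so $\pred{\cnf{F}}$ is bounded by the polarity-flip of $\pred{\lnot F}$, i.e.\ by $\pred{F}$, not by $\pred{\lnot F}$ as your wording suggests.
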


\begin{prop}%
  \label{prop-cnf-dnf-properties}
Let $F_1, F_2,$ $\ldots, F_n$ be NNF formulas. Then
\slabinline{prop-cnf-sub} Each clause in $\cnf{\bigwedge_{i=1}^{n} F_i}$ is in some
$\cnf{F_j}$.
\slabinline{prop-dnf-sub} Each conjunctive clause in $\dnf{\bigvee_{i=1}^{n} F_i}$
is in some $\dnf{F_j}$.
\slabinline{prop-cnf-unit} Formulas $F_j$ that are literals are in each clause in
$\cnf{\bigvee_{i=1}^{n} F_i}$.
\slabinline{prop-dnf-unit} Formulas $F_j$ that are literals are in each conjunctive
clause in $\dnf{\bigwedge_{i=1}^{n} F_i}$.
\slabinline{prop-crr-preserve} If $S$ is a set of variables such that for all $i \in
\{1,\ldots,n\}$ and clauses~$C$ in $\cnf{F_i}$ it holds that $\vall{C} \cap S
\subseteq \vneg{C}$, then for all clauses~$C$ in $\cnf{\bigvee_{i=1}^{n} F_i}$
it holds that $\vall{C} \cap S \subseteq \vneg{C}$.
\slabinline{prop-drr-preserve} If $S$ is a set of variables such that for all $i \in
\{1,\ldots,n\}$ and conjunctive clauses~$D$ in $\dnf{F_i}$ it holds that
$\vall{D} \cap S \subseteq \vpos{D}$, then for all conjunctive clauses~$D$ in
$\dnf{\bigwedge_{i=1}^{n} F_i}$ it holds that $\vall{D} \cap S \subseteq
\vpos{D}$.
\end{prop}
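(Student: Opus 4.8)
Since $\cnf{\cdot}$ and $\dnf{\cdot}$ are fixed only up to the properties collected in Props.~\ref{prop-cnf-dnf-base}--\ref{prop-cnf-dnf-properties}, it suffices to exhibit one construction meeting them. The plan is to take the standard one — prenex the formula (renaming bound variables only where needed) and then convert the matrix to CNF, and dually to DNF, by exhaustively applying distributivity, $A \lor (B \land C) \equiv (A \lor B) \land (A \lor C)$ and its dual, \emph{without} any resolution-style simplification — and to verify the six items for it. As $F_1, \ldots, F_n$ are NNF formulas, $\bigwedge_{i=1}^n F_i$ and $\bigvee_{i=1}^n F_i$ are quantifier-free, so the prenex step is vacuous and $\cnf{\cdot}$ and $\dnf{\cdot}$ reduce here to the matrix transformation.

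The crux will be a description of the clause sets this transformation produces. Writing $\mathrm{CL}(G)$ for the set of clauses of $\cnf{G}$ and $\mathrm{DL}(G)$ for the set of conjunctive clauses of $\dnf{G}$, an easy induction on the structure of the NNF formula $G$ gives: $\mathrm{CL}(L) = \mathrm{DL}(L) = \{L\}$ for a literal $L$; $\mathrm{CL}(\bigwedge_i G_i) = \bigcup_i \mathrm{CL}(G_i)$ and, symmetrically, $\mathrm{DL}(\bigvee_i G_i) = \bigcup_i \mathrm{DL}(G_i)$; and $\mathrm{CL}(\bigvee_i G_i) = \{\, C_1 \lor \cdots \lor C_n \mid C_i \in \mathrm{CL}(G_i) \,\}$ together with the dual $\mathrm{DL}(\bigwedge_i G_i) = \{\, D_1 \land \cdots \land D_n \mid D_i \in \mathrm{DL}(G_i) \,\}$, where $\true$ and $\false$ are handled by the usual conventions for empty conjunctions and disjunctions so that the degenerate cases (an empty clause set, or the empty clause) are covered uniformly.

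From these identities the items follow quickly. Items \ref{prop-cnf-sub} and \ref{prop-dnf-sub} are immediate from the clause-union identities. For \ref{prop-cnf-unit}, if $F_j = L$ is a literal then $\mathrm{CL}(F_j) = \{L\}$, so every clause of $\cnf{\bigvee_i F_i}$ is a disjunction $C_1 \lor \cdots \lor C_n$ with $C_j = L$ and hence contains $L$; \ref{prop-dnf-unit} is symmetric. For \ref{prop-crr-preserve}, a clause $C$ of $\cnf{\bigvee_i F_i}$ has the form $C_1 \lor \cdots \lor C_n$ with each $C_i$ a clause of $\cnf{F_i}$; since building a disjunction of NNF formulas changes no atom's polarity, $\vall{C} = \bigcup_i \vall{C_i}$ and $\vneg{C} = \bigcup_i \vneg{C_i}$, whence $\vall{C} \cap S = \bigcup_i (\vall{C_i} \cap S) \subseteq \bigcup_i \vneg{C_i} = \vneg{C}$, using the hypothesis on each $C_i$; \ref{prop-drr-preserve} follows by the same argument with $\land$, $\mathrm{DL}$ and $\vpos{\cdot}$ in place of $\lor$, $\mathrm{CL}$ and $\vneg{\cdot}$. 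Each DNF item can alternatively be obtained from its CNF counterpart through Prop.~\ref{prop-cnf-dnf-duality}, since $\dual{\cdot}$ maps a CNF to a DNF clause by clause, leaves variable sets unchanged, and swaps positive and negative occurrences in step with the complementation of literals.

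I expect the real work to be in the interface, not in any calculation. One has to commit to a construction that is ``pure distribution'', so that it never creates a clause lying outside every $\cnf{F_j}$ (which would break \ref{prop-cnf-sub}) and never drops an occurrence of a unit literal (which would break \ref{prop-cnf-unit}); innocuous normalizations such as deleting repeated literals inside a clause or deleting subsumed or tautological clauses stay compatible with all six items and with Props.~\ref{prop-cnf-dnf-base} and~\ref{prop-cnf-dnf-duality}, but this has to be checked rather than assumed. The remainder is routine bookkeeping for the $\true$/$\false$ corner cases and for the case where some $\cnf{F_i}$ has an empty clause set, which makes $\cnf{\bigvee_i F_i}$ empty and the affected claims vacuous.
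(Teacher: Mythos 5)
Your proposal is correct, and it supplies exactly what the paper leaves implicit: the paper states these six items as ``easy to achieve'' assumed properties of the (non-unique) $\cnf{\cdot}$ and $\dnf{\cdot}$ operators and gives no proof, so exhibiting the standard distributive conversion and reading the items off the resulting clause-set identities is precisely the intended justification. Your closing caveats about which normalizations preserve the items, and about the $\true$/$\false$ and empty-clause-set corner cases, are the right things to flag and are handled correctly.
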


\subsection{Used Notions of Range-Restriction}

The following definition renders the characteristics of the range-restricted
formulas as considered by Van Gelder and Topor in \cite[Theorem~7.2]{vgt}
(except for the special consideration of equality in \cite{vgt}).
\begin{defn}%
  \label{def-rr-vgt}
  A formula $F$ with free variables $\VX$ is called \defname{\vgtrred} if
  $\cnf{F} = Q\, \MC$ and $\dnf{F} = Q\, \MD$, where $Q$ is a quantifier
  prefix (the same in both formulas) upon universally quantified
  variables~$\VU$ and existentially quantified variables~$\VE$ (in arbitrary
  order), and $\MC$, $\MD$ are quantifier-free formulas in CNF and DNF,
  respectively, such that
  \begin{enumerate}
  \item \label{item-def-rr-vgt-u} For all clauses~$C$ in $\MC$ it holds that
    $\vall{C} \cap \VU \subseteq \vneg{C}$.
  \item \label{item-def-rr-vgt-e} For all conjunctive clauses~$D$ in $\MD$ it
    holds that $\vall{D} \cap \VE \subseteq \vpos{D}$.
   \item \label{item-def-rr-vgt-x} For all conjunctive clauses~$D$ in $\MD$ it
     holds that $\VX \subseteq \vpos{D}$.
  \end{enumerate}
\end{defn}
For \vgtrred formulas it is shown in \cite{vgt} that these can be translated
via two intermediate formula classes to a relational algebra expression.
Related earlier results include
\cite{nicolas:1979:early,nicolas:1982,demolombe:1982:early,demolombe:1992}.
The constraint on universal variables is also useful on its own as a weaker
variation of range-restriction, defined as follows.
\begin{defn}%
  \label{def-rr-semi}
  A formula $F$ is called \defname{\unirred} if $\cnf{F} = Q\, \MC$ where $Q$
  is a quantifier prefix upon the universally quantified variables~$\VU$
  (there may also be existentially quantified variables in $Q$) and $\MC$ is a
  quantifier-free formula in CNF such that
  for all clauses~$C$ in $\MC$ it holds
  that $\vall{C} \cap \VU \subseteq \vneg{C}$.
\end{defn}
For formulas without free variables, U-range-restriction and
VGT-range-restric\-tion are related as follows.
\begin{prop}%
  Let $F$ be a sentence. Then
  \slabinline{prop-vgt-ito-semi}
  $F$ is \vgtrred iff $F$ and $\lnot F$ are
  both \unirred.
  \slabinline{ptop-vgt-equi-semi-uni} If $F$ is universal (i.e., in prenex
  form with only universal quantifiers), then $F$ is \vgtrred iff $F$ is
  \unirred.
  \slabinline{ptop-vgt-equi-semi-exi} If $F$ is existential (i.e., in prenex
  form with only existential quantifiers), then $F$ is \vgtrred iff $\lnot F$
  is \unirred.
\end{prop}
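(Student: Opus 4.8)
The plan is to prove each of the three items using the correspondence between the CNF and DNF representations established by Props.~\ref{prop-cnf-dnf-base}--\ref{prop-cnf-dnf-properties}, together with the duality Prop.~\ref{prop-cnf-dnf-duality} and the definition of \dual{\cdot}. For item~\ibrefNumber{prop-vgt-ito-semi}, the ``only if'' direction is immediate: if $F$ is \vgtrred via $\cnf{F} = Q\,\MC$ and $\dnf{F} = Q\,\MD$ satisfying conditions~\ref{item-def-rr-vgt-u}--\ref{item-def-rr-vgt-x} with $\VX = \emptyset$, then condition~\ref{item-def-rr-vgt-u} directly witnesses that $F$ is \unirred. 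For $\lnot F$, observe that $\cnf{\lnot F} = \dual{\dnf{F}} = \dual{Q\,\MD}$, which has the dual prefix $\dual{Q}$ and matrix $\dual{\MD}$ in CNF; the universal variables of $\dual{Q}$ are exactly the existential variables~$\VE$ of $Q$. A clause $C$ in $\dual{\MD}$ is the dual $\dual{D}$ of a conjunctive clause $D$ in $\MD$, obtained by complementing every literal, so $\vpos{D}$ becomes $\vneg{\dual{D}}$ and $\vall{D} = \vall{\dual{D}}$. Thus condition~\ref{item-def-rr-vgt-e}, $\vall{D} \cap \VE \subseteq \vpos{D}$, translates exactly into $\vall{C} \cap \VE \subseteq \vneg{C}$ for clauses $C$ in $\cnf{\lnot F}$, which is \unirrion of $\lnot F$.

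For the ``if'' direction of~\ibrefNumber{prop-vgt-ito-semi}, assume both $F$ and $\lnot F$ are \unirred. From \unirrion of $F$ we get $\cnf{F} = Q\,\MC$ with the universal-variable condition~\ref{item-def-rr-vgt-u}. From \unirrion of $\lnot F$ we get $\cnf{\lnot F} = Q^{-}\,\MC^{-}$ with an analogous condition on its universal variables; applying $\dual{\cdot}$ and Prop.~\ref{prop-cnf-dnf-duality} (namely $\dnf{F} = \dual{\cnf{\lnot F}}$) yields a DNF representation $\dnf{F} = \dual{Q^{-}}\,\dual{\MC^{-}}$ whose conjunctive clauses satisfy condition~\ref{item-def-rr-vgt-e}, again by the polarity-swapping behaviour of $\dual{\cdot}$ on literals. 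Condition~\ref{item-def-rr-vgt-x} holds vacuously since $F$ is a sentence, i.e.\ $\VX = \emptyset$. The one subtlety here is that \vgtrrion requires the \emph{same} prefix $Q$ in both $\cnf{F}$ and $\dnf{F}$, whereas the two \unirrion hypotheses a priori supply unrelated prefixes; I would address this by appealing to the remark in Sect.~\ref{sec-cnf-dnf} that $\cnf{F}$ and $\dnf{F}$ are only fixed up to the ``easy to achieve'' syntactic properties, so one may always prenex both to a common prefix (the quantifier prefix of a prenex formula being essentially determined by the formula up to reordering of adjacent like quantifiers, which does not affect any of the three conditions).

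For item~\ibrefNumber{ptop-vgt-equi-semi-uni}, if $F$ is universal then $\cnf{F} = Q\,\MC$ with $Q$ containing only $\forall$; hence $\cnf{\lnot F} = \dual{\dnf{F}}$ has prefix with only $\exists$, so $\lnot F$ has \emph{no} universal variables and the \unirrion condition on $\lnot F$ is trivially met. By item~\ibrefNumber{prop-vgt-ito-semi}, $F$ is \vgtrred iff $F$ and $\lnot F$ are both \unirred iff $F$ is \unirred. Item~\ibrefNumber{ptop-vgt-equi-semi-exi} is the mirror image: if $F$ is existential, then $\lnot F$ is universal, and by~\ibrefNumber{prop-vgt-ito-semi} together with~\ibrefNumber{ptop-vgt-equi-semi-uni} applied to $\lnot F$, $F$ is \vgtrred iff $\lnot F$ is \vgtrred iff $\lnot F$ is \unirred. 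Note \vgtrrion is symmetric in $F$ and $\lnot F$ by construction, which justifies the first of these equivalences.

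The main obstacle I anticipate is \emph{not} any deep argument but rather making the prefix-matching in the ``if'' direction of~\ibrefNumber{prop-vgt-ito-semi} fully rigorous: one must argue that the functional notations $\cnf{F}$, $\dnf{F}$ can be chosen with a common prefix while preserving conditions~\ref{item-def-rr-vgt-u}--\ref{item-def-rr-vgt-x}, and that the duality $\dual{Q^{-}} $ of the prefix from $\cnf{\lnot F}$ is compatible with the prefix from $\cnf{F}$. Everything else reduces to the bookkeeping observation that $\dual{\cdot}$ swaps $\land/\lor$, $\forall/\exists$, and literal polarity, hence turns $\vpos{\cdot}$ into $\vneg{\cdot}$ and vice versa while leaving $\vall{\cdot}$ unchanged, so that condition~\ref{item-def-rr-vgt-u} on a CNF and condition~\ref{item-def-rr-vgt-e} on the dual DNF are literally the same statement.
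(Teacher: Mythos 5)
Your proof is correct. The paper states this proposition without proof (it is not among the claims supplemented in the appendix), and your argument is clearly the intended one: the duality $\dnf{F} = \dual{\cnf{\lnot F}}$ of Prop.~\ref{prop-cnf-dnf-duality} together with the observation that $\dual{\cdot}$ swaps quantifiers and literal polarity turns condition~(\ref{item-def-rr-vgt-u}.) of Def.~\ref{def-rr-vgt} on $\cnf{\lnot F}$ into condition~(\ref{item-def-rr-vgt-e}.) on $\dnf{F}$ and vice versa, while condition~(\ref{item-def-rr-vgt-x}.) is vacuous for sentences. Your flagged subtlety about the common prefix is also resolved exactly as you suggest: Sect.~\ref{sec-cnf-dnf} already stipulates that $\cnf{F}$ and $\dnf{F}$ are prenex forms with the \emph{same} quantifier prefix $Q$, so the prefix of $\cnf{\lnot F} = \dual{\dnf{F}}$ is forced to be $\dual{Q}$ and no additional reordering argument is required.
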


U-range-restriction covers well-known restrictions of knowledge bases and
inputs of bottom-up calculi for first-order logic and fragments of it that are
naturally represented by clausal formulas \cite{bumg}. First-order
representations of tuple-generating dependencies (TGDs) are
VGT-range-restricted sentences: conjunctions of sentences of the form $\forall
\XS\YS\, (A(\XS\YS) \imp \exists \ZS\, B(\YS\ZS))$, where $A$ is a possibly
empty conjunction of relational atoms, $B$ is a nonempty conjunction of
relational atoms and the free variables of $A$ and $B$ are exactly those in
the sequences $\XS\YS$ and $\YS\ZS$, respectively. Also certain
generalizations, e.g., to disjunctive TGDs, where $B$ is built up from atoms,
$\land$ and $\lor$, are VGT-range-restricted.

\subsection{Results on Range-Restricted Interpolation}

The following theorem shows three variations for obtaining range-restricted
interpolants from range-restricted inputs.
\begin{thm}[Interpolation and Range-Restriction]
  \label{thm-ipol-rr}
  Let $F$ and $G$ be formulas such that $F \entails G$.

  \smallskip

  \slab{thm-ipol-semi} If $F$ is \unirred, then there exists a
  \univ-range-restricted Craig-Lyndon interpolant~$H$ of $F$ and~$G$.
  Moreover, $H$ can be effectively constructed from
  a clausal tableau proof of $F \entails G$.

  \smallskip

  \slab{thm-ipol-vgt} If $F$ and $G$ are sentences such that $F$ and $\lnot G$
  are \unirred, then there exists a \vgtrred Craig-Lyndon interpolant~$H$
  of $F$ and $G$. Moreover, $H$ can be effectively constructed from
  a clausal tableau proof of $F \entails G$.

  \smallskip

  \slab{thm-ipol-vgt-vx} If $F$ and $\lnot G$ are \unirred, $\var{F} = \var{G}
  = \VX$, and (1)~no clause in $\cnf{F}$ has only negative literals; (2)~for
  all clauses~$C$ in $\cnf{\lnot G}$ with only negative literals it holds that
  $\VX \subseteq \vneg{C}$; (3)~for all clauses~$C$ in $\cnf{\lnot G}$ it
  holds that $\vall{C} \cap \VX \subseteq \vneg{C}$, then there
  exists a \vgtrred
  Craig-Lyndon interpolant~$H$ of $F$ and $G$. Moreover, $H$
  can be effectively constructed from a clausal tableau proof
  of $F \entails G$.
\end{thm}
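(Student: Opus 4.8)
The plan is to prove all three parts uniformly by a single analysis of the \CTIF\ procedure of Fig.~\ref{fig-ctif}, run on a proof that has first been brought into \emph{hyper} form by the transformation of Sect.~\ref{sec-hyper-convert}. Since that transformation is w.l.o.g.\ and \CTIF\ is complete, the returned $H$ is in any case a Craig--Lyndon interpolant of $F$ and $G$, so only the range-restriction properties remain to be verified. Two preparatory reductions fix the setting. First, in Step~\ref{step-pre} we may take the prenex clausal form of $F_c$ to be one witnessing \unirrion\ of $F$, so that $\var{F'} \subseteq \VU$, every variable occurring in a clause of $F'$ occurs there in a negative literal, and $\fun{F'}$ is disjoint from $\ggs$; symmetrically for $G'$ whenever $\lnot G$ is assumed \unirred. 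Second, $H$ is obtained from the ground interpolant $\HG$ only by abstracting the $\fgs$-maximal occurrences of $\fgs\sterms$ and prefixing quantifiers (Step~\ref{step-lifting}) and by the final renaming of placeholder constants (Step~\ref{step-constants-to-vars}); hence $H$ is already prenex with quantifier-free matrix $\HG'$, and $\cnf{H} = Q_1 v_1 \ldots Q_n v_n\,\cnf{\HG'}$, $\dnf{H} = Q_1 v_1 \ldots Q_n v_n\,\dnf{\HG'}$ have one and the same prefix, whose universal variables are exactly the $v_i$ abstracting $\ggs\sterms$, whose existential variables are those abstracting $\ffs\sterms$, and whose corresponding free variables are exactly the surviving placeholder constants. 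All range-restriction conditions therefore reduce to statements about which terms occur, and with which polarity, in $\HG$; these are established by induction on the tableau following the recursive definition of $\nipol{\cdot}$, propagating the invariant through its $\bigvee$- and $\bigwedge$-steps with Prop.~\ref{prop-cnf-dnf-base}--\ref{prop-cnf-dnf-properties}, notably \ref{prop-crr-preserve} and \ref{prop-drr-preserve}.

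For part~\ref{thm-ipol-semi}: by the hyper property every node carrying a negative literal is a leaf, so the only literals the leaf table of $\nipol{\cdot}$ feeds into $\HG$ are negative literals $\nlit{N}$ from $\aaa$-side leaves with $\bbb$-side target, and their complements $\du{\nlit{N}}$ -- which are positive -- from $\bbb$-side leaves with $\aaa$-side target. Thus every positive literal in $\HG$ is the atom of the target literal of such a $\bbb$-side/$\aaa$-target closing leaf, and that target, having side $\aaa$, is a positive literal of an instance of an $F'$-clause; since $\var{F'} \subseteq \VU$ and $\fun{F'}$ avoids $\ggs$, this atom is a positive $F'$-literal with ground terms substituted for $\VU$-variables, so a grounding in Step~\ref{step-grounding} that uses no $\ggs$-symbol for these variables leaves $\HG$ without any $\fgs$-maximal occurrence of a $\ggs\sterm$ inside a positive literal. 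The induction along $\nipol{\cdot}$ then yields that in every clause of $\cnf{\HG'}$ the variables abstracting $\ggs\sterms$ occur only negatively: the $\false$- and $\true$-leaves and the negative-literal leaves make this hold trivially, a positive-literal leaf cannot occur by the hyper property, and the two inductive steps are exactly Prop.~\ref{prop-cnf-sub} and Prop.~\ref{prop-crr-preserve}. Hence $H$ is \unirred, effectively constructed from the tableau.

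Parts~\ref{thm-ipol-vgt} and~\ref{thm-ipol-vgt-vx} bootstrap from part~\ref{thm-ipol-semi} through a symmetry: swapping all side labels of the \sided\ hyper tableau for $F'$ and $G'$ gives a \sided\ hyper tableau for $G'$ and $F'$ whose $\aaa$-side clauses are instances of clauses of $G'$, i.e.\ of the clausal form of $\lnot G$, and on it the value of $\nipol{\cdot}$ is the $\dual{\cdot}$-image of the original one, so lifting it produces $\dual{H}$, a prenex formula with NNF matrix logically equivalent to $\lnot H$. Part~\ref{thm-ipol-semi} applied to the original labelling (legitimate since $F$ is \unirred) gives that $H$ is \unirred, i.e.\ condition~\ref{item-def-rr-vgt-u} of Def.~\ref{def-rr-vgt}; applied to the swapped labelling (legitimate since its source formula clausifies the \unirred\ formula $\lnot G$) it gives that $\dual{H}$ is \unirred, which by Prop.~\ref{prop-cnf-dnf-duality} is precisely condition~\ref{item-def-rr-vgt-e} for $H$, the conjunctive clauses of $\dnf{\HG'}$ dualizing to the clauses of the matrix of $\cnf{\dual{H}}$ and $\vpos{D}$ to $\vneg{\dual{D}}$. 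For sentences $\VX$ is empty, so this already establishes part~\ref{thm-ipol-vgt}, in accordance with Prop.~\ref{prop-vgt-ito-semi}. For part~\ref{thm-ipol-vgt-vx} there remains condition~\ref{item-def-rr-vgt-x}: the placeholder constant of every $x \in \VX$ must occur positively in every conjunctive clause of $\dnf{\HG'}$, equivalently of $\dnf{\HG}$ since these constants are not abstracted. A conjunctive clause of $\dnf{\HG}$ corresponds, by the recursion, to a subtree of the tableau that descends into a single child at every $\aaa$-side inner node and into all children at every $\bbb$-side inner node, and it collects the literals at the leaves of that subtree -- the $\bbb$-side/$\aaa$-target leaves contributing the positive ones. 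Hypothesis~(1) of the theorem, together with the hyper property, forces every such subtree to pass through at least one $\bbb$-side/$\aaa$-target leaf, while hypotheses~(2) and~(3) -- the free variables are carried negatively through every clause of $\cnf{\lnot G}$ and occur in every all-negative one -- ensure that the placeholder of each $x$ indeed appears, positively, in the atom contributed by such a leaf. Hence $H$ is \vgtrred.

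The step I expect to be the main obstacle is the control of the grounding in part~\ref{thm-ipol-semi} (and its counterpart in parts~\ref{thm-ipol-vgt}/\ref{thm-ipol-vgt-vx}): one must show that Step~\ref{step-grounding} can always be performed so that the $\VU$-variables of the tableau receive $\ggs$-free ground terms -- and the universal variables stemming from $\lnot G$ receive $\ffs$-free ground terms -- equivalently, that in a hyper tableau no such variable is ever forced through the closure conditions to be unified with a term headed by a Skolem or function symbol of the other side. This is exactly where restricting to hyper tableaux is indispensable, since in an unrestricted closed tableau a universally quantified $F$-variable can well be compelled to take a value containing a Skolem term of $G$. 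The remaining work -- commuting $\cnf{\cdot}$ and $\dnf{\cdot}$ with the quantifier prefix, tracking Lyndon polarities through the $\nipol{\cdot}$ tables, handling the placeholder constants, and the routine bookkeeping of the inductions -- is straightforward given Prop.~\ref{prop-cnf-dnf-base}--\ref{prop-cnf-dnf-properties}.
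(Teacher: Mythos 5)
Your overall skeleton---run \CTIF on a hyper tableau, establish a property of $\HG$ by induction along the recursion of $\nipol{\cdot}$ using Props.~\ref{prop-cnf-sub} and~\ref{prop-crr-preserve}, and obtain the DNF/existential half of VGT-range-restriction by a duality (side-swapping) argument---matches the paper's. But the load-bearing step of your part~(\ref{thm-ipol-semi}) is wrong, and you have in fact sensed this by flagging it as ``the main obstacle'': one cannot arrange the grounding so that the variables of $F'$-clauses receive $\ggs$-free terms, and consequently $\HG$ \emph{can} contain $\fgs$-maximal occurrences of $\ggs$-terms in positive literals, even in a hyper tableau. Closure can force an $F$-variable onto a $\ggs$-term: take $F = \forall x\,(\fp(x) \imp \fq(x))$ and $G = \forall y\,(\fp(y) \imp \fq(y))$, so that $G' = \fp(\fg) \land \lnot\fq(\fg)$ with Skolem constant $\fg \in \ggs$; the hyper, leaf-closed tableau with root clause $\fp(\fg)$ (side $\RR$), below it $\lnot\fp(x) \lor \fq(x)$ (side $\LL$), and below $\fq(x)$ the leaf $\lnot\fq(\fg)$ forces $x = \fg$ and yields $\HG = \lnot\fp(\fg) \lor \fq(\fg)$, in which $\fg$ occurs maximally in a positive literal. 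The resulting interpolant $\forall v\,(\lnot\fp(v)\lor\fq(v))$ is nonetheless \unirred, because \unirrion only demands that each universal variable of a clause occur in \emph{some} negative literal, not that it occur \emph{only} negatively---so your induction invariant (``the variables abstracting $\ggs\sterms$ occur only negatively'') is both stronger than needed and false, and is moreover not what Prop.~\ref{prop-crr-preserve} propagates.

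The paper's fix is a clause-relative invariant with a path-context term rather than a literal-level occurrence restriction: $\INVC{N}$ states that for every clause $C$ of $\cnf{\nipol{N}}$ one has $\vtmax{C} \cap \VG \subseteq \vtmaxneg{C} \cup \vtmaxpos{\npath{\LL}{N}}$ (Lemma~\ref{lem-inv-c}), i.e., a maximal $\VG$-term in $C$ is tolerated provided it also occurs negatively in $C$ or positively on the $\LL$-part of the path above $N$; at the root the path term vanishes and the desired property of $\HG$ follows. The only fact needed about $F$ is Lemma~\ref{lem-i:u}---in any \emph{instance} of an $\FC$-clause every maximal $\VG$-term occurs in a negative literal---which holds for arbitrary groundings, since $\ggs$-symbols enter $\FC$-clauses only through substituted variables and those occur negatively by \unirrion. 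The hyper property is used to drive the induction (leaves negative, inner nodes positive), not to constrain unification. Your duality argument for the $\dnf{}$ half corresponds to the paper's ``in full analogy'' treatment of Lemma~\ref{lem-inv-d} and is fine in spirit, but inherits the same gap; and your sketch for condition~(\ref{item-def-rr-vgt-x}.) of Def.~\ref{def-rr-vgt} would additionally need the structural fact that every inner node of a closed hyper tableau lies above a node whose clause is all-negative (Lemma~\ref{lem-above-negative-clause}), which anchors the base case of the paper's $\INVX{}$ induction.
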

Observe that Theorem~\ref{thm-ipol-semi} requires range-restriction only
for~$F$, the first of the two interpolation arguments.
Theorem~\ref{thm-ipol-vgt-vx} aims at applications for query reformulation
that in a basic form are expressed as interpolation task for input formulas $F
= K \land Q(\XS)$ and $G = \lnot K' \lor Q'(\XS)$. Here $K$ expresses
background knowledge and constraints as a \unirred sentence and
$Q(\XS)$ represents a query to be reformulated, with free variables $\XS$.
Formulas $K'$ and $Q'$ are copies of $K$ and $Q$, respectively, where
predicates not allowed in the interpolant are replaced by primed versions. If
the query $Q$ is Boolean, i.e., $\XS$ is empty, and $Q$ is \vgtrred, then
Theorem~\ref{thm-ipol-vgt} already suffices to justify the construction of a
\vgtrred interpolant.
If~$\XS$ is not empty, the fine-print preconditions of
Theorem~\ref{thm-ipol-vgt-vx} come into play. Precondition~(1) requires that
$\cnf{K}$ does not have a clause with only negative literals, which is
satisfied if $K$ represents TGDs. Also $\cnf{Q}$ is not allowed to have a
clause with only negative literals. By precondition~(2) all the free variables
$\XS$ must occur in all those clauses of $\cnf{\lnot Q}$ that only have
negative literals, which follows if $Q$ meets
condition~(\ref{item-def-rr-vgt-x}.) of the \vgtrrion (Def.~\ref{def-rr-vgt}).
By precondition~(3) for all clauses~$C$ in $\cnf{\lnot Q}$ it must hold that
$\vall{C} \cap \XS \subseteq \vneg{C}$. A sufficient condition for~$Q$ to meet
all these preconditions is that $\dnf{Q}$ has a purely existential quantifier
prefix and a matrix with only positive literals where each query variable,
i.e., member of $\XS$, occurs in each conjunctive clause.

\subsection{Proving Range-Restricted Interpolation -- The Hyper Property}
\label{sec-proving-rr}

We will prove Theorem~\ref{thm-ipol-rr} by showing how the claimed
interpolants can be obtained with \CTIF. As a preparatory step we match items
from the specification of \CTIF (Fig.~\ref{fig-ctif}) with the constraints of
range-restriction. The following notion gathers intermediate formulas and sets
of symbols of \CTIF.
\begin{defn}%
 An \defname{interpolation context} is a tuple $\ipctx$, where $F,G$ are
 formulas, $\FC, \NGC$ are clausal formulas, $\VC$ is a set of constants,
 $\ffs, \ggs$ are sets of functions, and $\VF, \VG, \VV$ are sets of terms
 such that the following holds.
(i)~$F \entails G$.
(ii)~Let $F_c$ and $G_c$ be $F$ and $G$ after replacing each free variable
with a dedicated fresh constant. Let $\VC$ be those constants that were used
there to replace a variable that occurs in both $F$ and $G$. $\FC$ and $\NGC$
are the matrices of $\cnf{F_c}$ and of $\cnf{\lnot G_c}$, after replacing
existentially quantified variables with Skolem terms.
(iii)~$\ffs$ is the union of the set of the Skolem functions introduced for
existential quantifiers of $\cnf{F_c}$, the set of functions occurring in
$F_c$ but not in $G_c$ and, possibly, further functions freshly introduced in
the grounding step of \CTIF. Analogously, $\ggs$ is the union of the set of
the Skolem functions introduced for $\cnf{\lnot G_c}$, the set of functions
occurring in $G_c$ but not in $F_c$, and, possibly, further functions
introduced in grounding.
(iv)~$\VF$ and $\VG$ are the sets of all terms with outermost function symbol
in $\ffs$ and $\ggs$, respectively.
(v)~$\VV$ is $\VF \cup \VG \cup \VC$.
\end{defn}

The following statements about an interpolation context are easy to infer.
\begin{lem}%
  \label{lem-clause-shape}
  Let $\ipctx$ be an interpolation context. Then
  \slabinline{lem-g:c} No member of $\ggs$ occurs in $\FC$.
  \slabinline{lem-f:c} No member of $\ffs$ occurs in $\NGC$.
  \slabinline{lem-f:cv} If $F$ is \unirred, then for all clauses $C$ in $\FC$ it
  holds that if a variable occurs in $C$ in a position that is not within an
  $\VF$-term it occurs in $C$ in a negative literal, in a position that is not
  within an $\VF$-term.
  \slabinline{lem-g:cv} If $\lnot G$ is \unirred, then for all clauses $C$ in $\NGC$
  it holds that if a variable occurs in $C$ in a position that is not within an
  $\VG$-term, it occurs in $C$ in a negative literal, in a position that is not
  within an $\VG$-term.
  \slabinline{lem-x:cv} If $G$ satisfies condition~(3) of
  Theorem~\ref{thm-ipol-vgt-vx}, then for all clauses $C$ in $\NGC$ it holds
  that any member of $\VC$ that occurs in $C$ in a position that is not within
  an $\VG$-term occurs in $C$ in a negative literal in a position that is not
  within an $\VG$-term.
\end{lem}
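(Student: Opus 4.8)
The plan is to unfold the definition of interpolation context for each of the five items and reduce each to the stated syntactic properties of the original formulas together with the bookkeeping built into the construction of $\FC$, $\NGC$, $\ffs$, $\ggs$, $\VF$, $\VG$. For \slabinline{lem-g:c} and \slabinline{lem-f:c}: by~(iii), $\ggs$ consists of Skolem functions introduced for $\cnf{\lnot G_c}$, functions that occur in $G_c$ but not in $F_c$, and fresh grounding functions. The Skolem functions for $\lnot G_c$ are chosen disjoint from everything in $F_c$ and from the Skolem functions for $F_c$; functions occurring in $G_c$ but not in $F_c$ trivially do not occur in $F_c$; and fresh grounding functions occur in no input formula. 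Since by~(ii) $\FC$ is the Skolemized matrix of $\cnf{F_c}$, Prop.~\ref{prop-cnf-dnf-base} gives $\voc{\cnf{F_c}} \subseteq \voc{F_c}$, so the only functions in $\FC$ are those of $F_c$ plus the Skolem functions introduced for $\cnf{F_c}$ — none of which lies in $\ggs$. This proves \slabinline{lem-g:c}, and \slabinline{lem-f:c} is symmetric, swapping the roles of $F_c$ and $\lnot G_c$.

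For \slabinline{lem-f:cv}: assume $F$ is \unirred, so $\cnf{F} = Q\,\MC$ with every clause $C$ in $\MC$ satisfying $\vall{C} \cap \VU \subseteq \vneg{C}$, where $\VU$ is the set of universally quantified variables of $Q$. Passing from $F$ to $F_c$ only replaces free variables by constants, which does not affect this property of the CNF matrix (the free variables of $F$ are not in $\VU$). Now $\FC$ is obtained from $\MC$ by replacing each existentially quantified variable of $Q$ with a Skolem term; the remaining (free) variables of $\FC$ are exactly the members of $\VU$. A variable occurrence in $\FC$ that is \emph{not} within an $\VF$-term cannot have been introduced by Skolemization (those occurrences sit inside terms whose outermost symbol is a Skolem function, hence inside an $\VF$-term by~(iv)); so it descends from an occurrence of a $\VU$-variable already present in $\MC$, at a position not inside any function term of $\MC$. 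By the \unirred condition that variable also occurs in $\MC$ in a negative literal; that negative occurrence survives Skolemization at a position not inside a Skolem term, i.e.\ not inside an $\VF$-term. Item \slabinline{lem-g:cv} is the mirror image, applying Def.~\ref{def-rr-semi} to $\lnot G$ and using that $\NGC$ is the Skolemized matrix of $\cnf{\lnot G_c}$.

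For \slabinline{lem-x:cv}: assume condition~(3) of Theorem~\ref{thm-ipol-vgt-vx}, i.e.\ for all clauses $C$ in $\cnf{\lnot G}$ it holds that $\vall{C} \cap \VX \subseteq \vneg{C}$, where $\VX = \var{G}$. Here the members of $\VX$ are precisely the free variables of $G$, which step~\ref{step-const} of \CTIF turns into the placeholder constants; since $\var{F} = \var{G} = \VX$, these constants all lie in $\VC$ by~(ii). So a member of $\VC$ occurring in a clause $C$ of $\NGC$ at a position not within a $\VG$-term corresponds to a free variable of $G$ occurring in the matrix of $\cnf{\lnot G}$ at a position not within a function term; by condition~(3) that variable occurs there in a negative literal, and as before that negative occurrence is preserved by Skolemization at a position not inside a $\VG$-term.

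I do not expect a genuine obstacle here; the content is entirely routine tracking of how CNF-conversion and Skolemization interact with the range-restriction conditions, and each item is essentially an instance of the same argument. The one point requiring a little care is the phrase ``in a position that is not within an $\VF$-term'' (resp.\ $\VG$-term): one must check that Skolemization does not move a variable occurrence that was outside every function term into the scope of a new function symbol — which is exactly why only \emph{existentially} quantified variables are Skolemized away while the universally quantified ones (the ones the condition talks about) are untouched — and, conversely, that a negative literal witnessing the condition in $\MC$ is not swallowed into an $\VF$-term by the substitution, which holds because Skolem terms are substituted \emph{for} variables, not wrapped \emph{around} literals.
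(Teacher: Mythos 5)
The paper offers no proof of this lemma at all --- it is introduced with ``the following statements about an interpolation context are easy to infer'' --- so there is no official argument to compare against. Your treatment of \ref{lem-g:c} and \ref{lem-f:c} is fine. The gap is in \ref{lem-f:cv}--\ref{lem-x:cv}, at precisely the point you single out as ``the one point requiring a little care'': you close it by equating ``not inside a Skolem term'' with ``not inside an $\VF$-term.'' That equation is false. By clause~(iii) of the definition of an interpolation context, $\ffs$ contains not only the Skolem functions of $\cnf{F_c}$ but also every function occurring in $F_c$ but not in $G_c$ (and possibly fresh grounding functions), and by~(iv) any term headed by one of these is an $\VF$-term. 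The \unirrion only guarantees that a universal variable has \emph{some} occurrence in a negative literal of the clause; nothing prevents that occurrence from being buried inside a term $\ff(\ldots)$ where $\ff$ is a function of $F$ absent from $G$ --- i.e., inside an $\VF$-term. For example, with $F = \forall x\, (\fp(x) \lor \lnot \fq(\ff(x)))$ and $\ff \notin \fun{G}$, the clause $\fp(x) \lor \lnot \fq(\ff(x))$ of $\FC$ satisfies the hypothesis of \ref{lem-f:cv} (the occurrence of $x$ in $\fp(x)$ is outside every $\VF$-term), yet the only negative occurrence of $x$ lies inside the $\VF$-term $\ff(x)$. Your argument therefore does not establish the conclusion here, and the same conflation of Skolem terms with $\VF$- resp.\ $\VG$-terms recurs verbatim in \ref{lem-g:cv} and \ref{lem-x:cv}.

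Two smaller points. First, the parenthetical claim in your proof of \ref{lem-f:cv} that the surviving occurrence sits ``at a position not inside any function term of $\MC$'' is likewise unjustified (it could sit inside a term headed by a function shared with $G$, which is not an $\VF$-term), though you do not actually use it. Second, the example above suggests the gap is not closed by mere rewording: either one needs an extra assumption (e.g., that no non-shared function symbol occurs above a universal variable, as in the relational setting the paper targets), or the range-restriction condition must be read as demanding a negative occurrence outside non-shared function terms. As written, your argument proves the lemma only in the special case where $\ffs$ (resp.\ $\ggs$) consists solely of Skolem functions.
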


\CTIF involves conversion of terms to variables at lifting
(step~\ref{step-lifting}) and at replacing placeholder constants
(step~\ref{step-constants-to-vars}). We introduce a notation to identify those
terms that will be converted there to variables. It mimics the notation for
the set of free variables of a formula but applies to a set of terms, those
with occurrences that are ``maximal'' with respect to a given set~$S$ of
terms, i.e., are not within another term from~$S$.
For NNF formulas $F$ define $S\xtmax{F}$ as the set of $S$-terms that occur in
$F$ in a position other than as subterm of another $S$-term. Define
$S\xtmaxpos{F}$ ($S\xtmaxneg{F}$, respectively) as the set of $S$-terms that
occur in $F$ in a positive (negative, respectively) literal in a position
other than as subterm of another $S$-term. We can now conclude from
Lemma~\ref{lem-clause-shape} the following properties of instances of clauses
used for interpolant construction.
\begin{lem}%
  \label{lem-clause-inst}
  Let $\ipctx$ be an interpolation context. Then

  \slab{lem-i:u} If $F$ is \unirred, then for all instances $C$
  of a clause in $\FC$ it holds that $\vtmax{C} \cap \VG \subseteq
  \vtmaxneg{C}$.

  \slab{lem-d:i:e} If $\lnot G$ is \unirred, then for all
  instances $C$ of a clause in $\NGC$ it holds that $\vtmax{C} \cap \VF
  \subseteq \vtmaxneg{C}$.

  \slab{lem-x:i:noneg} If condition~(1) of Theorem~\ref{thm-ipol-vgt-vx}
  holds, then no instance~$C$ of a clause in $\FC$ has only negative literals.

  \slab{lem-x:i:neg} If condition~(2) of Theorem~\ref{thm-ipol-vgt-vx} holds,
  then for all instances $C$ of a clause in $\NGC$ with only negative literals
  it holds that $\VC \subseteq \vtmaxneg{C}$.

  \slab{lem-x:i:x} If $\lnot G$ is \unirred and condition~(3) of
  Theorem~\ref{thm-ipol-vgt-vx} holds, then for all instances $C$ of a clause
  in $\NGC$ it holds that $\vtmax{C} \cap \VC \subseteq \vtmaxneg{C}$.

\end{lem}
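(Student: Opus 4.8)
The plan is to derive each item of this lemma from the corresponding item of Lemma~\ref{lem-clause-shape} by a uniform "instantiation" argument. The key observation is that the notions $\vtmax{\cdot}$, $\vtmaxpos{\cdot}$, $\vtmaxneg{\cdot}$ on instances behave under substitution exactly like $\var{\cdot}$, $\vpos{\cdot}$, $\vneg{\cdot}$ behave on atoms, once we read "variable occurring outside any $\VF$-term (resp.\ $\VG$-term)" as the right analogue of "free variable". Concretely, fix a clause $C_0$ in $\FC$ (or in $\NGC$) and an instance $C = C_0\sigma$. Because, by Lemma~\ref{lem-g:c}/\ref{lem-f:c}, no member of $\ggs$ occurs in $\FC$ (resp.\ no member of $\ffs$ in $\NGC$), the only $\VG$-terms (resp.\ $\VF$-terms) appearing in $C$ arise through the substitution $\sigma$, i.e., they sit inside the images $x\sigma$ of variables $x$ of $C_0$. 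So a $\VG$-term $t$ has a $\VG$-maximal occurrence in $C$ precisely when it occurs $\VG$-maximally in some $x\sigma$ with $x$ a variable of $C_0$ that itself occurs in $C_0$ outside any $\VF$-term; and the polarity of that occurrence in $C$ equals the polarity in $C_0$ of the literal where $x$ occurs, since substitution does not change polarities.

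First I would record this "transfer" observation precisely: for a clause $C_0$ over the signature of $\FC$ and a substitution $\sigma$, a term in $\VG$ has a $\VG$-maximal occurrence of positive (negative) polarity in $C_0\sigma$ iff it has a $\VG$-maximal occurrence of positive (negative) polarity in $x\sigma$ for some variable $x$ that occurs in $C_0$ with positive (negative) polarity in a position not inside an $\VF$-term. The dual statement holds with $\VF$, $\VG$, $\ffs$, $\ggs$ swapped and $\FC$ replaced by $\NGC$. With this in hand, item~\ref{lem-i:u} follows: if $t \in \vtmax{C} \cap \VG$ then, by the transfer observation, $t$ occurs $\VG$-maximally in $x\sigma$ for some variable $x$ occurring in $C_0$ outside any $\VF$-term; by Lemma~\ref{lem-f:cv} (the hypothesis "$F$ \unirred") such an $x$ also occurs in $C_0$ in a negative literal outside any $\VF$-term; hence $t$ has a $\VG$-maximal negative occurrence in $C$, i.e., $t \in \vtmaxneg{C}$. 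Item~\ref{lem-d:i:e} is the symmetric argument using Lemma~\ref{lem-g:cv}.

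For the remaining three items the argument is analogous but shorter. Item~\ref{lem-x:i:noneg}: substitution cannot turn a positive literal into a negative one, so if $C_0$ in $\FC$ has a positive literal then so does every instance $C$; condition~(1) of Theorem~\ref{thm-ipol-vgt-vx} supplies such a positive literal in every clause of $\FC$. Item~\ref{lem-x:i:neg}: if an instance $C = C_0\sigma$ of a clause $C_0$ in $\NGC$ has only negative literals, then $C_0$ has only negative literals (a positive literal in $C_0$ would stay positive), so by condition~(2) every member of $\VC$ occurs in $C_0$ — and, the constants of $\VC$ being ground and not members of $\ggs$, they occur outside any $\VG$-term and stay outside any $\VG$-term under $\sigma$, in a (necessarily negative) literal — hence $\VC \subseteq \vtmaxneg{C}$. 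Item~\ref{lem-x:i:x} combines the pattern of item~\ref{lem-i:u} with Lemma~\ref{lem-x:cv}: a member $c$ of $\VC$ with a $\VC$-maximal, equivalently $\VG$-maximal, occurrence in $C$ outside any $\VG$-term must occur so in $C_0$ (constants are not introduced by $\sigma$), hence by Lemma~\ref{lem-x:cv} in a negative literal outside any $\VG$-term of $C_0$, hence the same in $C$. The main obstacle is purely bookkeeping: making the "transfer observation" airtight, in particular being careful that (a) no $\VF$/$\VG$-terms are created "for free" inside $C_0$ because of Lemma~\ref{lem-g:c}/\ref{lem-f:c}, and (b) a position that is "outside any $\VF$-term" in $C_0$ remains "outside any $\VF$-term" after applying $\sigma$ — which holds because, again by Lemma~\ref{lem-g:c}, $\sigma$ cannot wrap an existing subterm of $C_0$ inside a new $\VF$-term in the relevant cases. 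Everything else is a direct appeal to Lemma~\ref{lem-clause-shape}.
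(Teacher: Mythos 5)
Your plan matches the paper's intent: the paper gives no explicit proof of Lemma~\ref{lem-clause-inst}, presenting it as something one "can conclude" from Lemma~\ref{lem-clause-shape}, and your uniform instantiation/"transfer" argument is precisely that missing bridge, with the two bookkeeping points (no $\VF$/$\VG$-terms arising except through $\sigma$, and positions outside $\VF$/$\VG$-terms staying so under $\sigma$) correctly identified and correctly resolved for items~\ref{lem-i:u}, \ref{lem-d:i:e} and~\ref{lem-x:i:x}. The one spot where your justification is weaker than your conclusion is item~\ref{lem-x:i:neg}: that a constant of $\VC$ is ground and not a member of $\ggs$ does not by itself imply that it occurs \emph{outside} every $\VG$-term in $\NGC$ --- it could occur as an argument of a $\ggs$-function or of a Skolem term --- so that claim needs a separate appeal to how $\NGC$ is constructed (condition~(2) of Theorem~\ref{thm-ipol-vgt-vx} only guarantees an occurrence in a negative literal, not a $\VV$-maximal one); this is a gap the paper itself also leaves implicit, since Lemma~\ref{lem-clause-shape} has no item covering condition~(2).
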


The following proposition adapts Props.~\ref{prop-crr-preserve}
and~\ref{prop-drr-preserve} to $S\tmaxfun$.
\begin{prop}
  \label{prop-cnf-dnf-properties-termadapt}
  Let $F_1, F_2,$ $\ldots, F_n$ be NNF formulas and let $T$ be a set of terms.
  Then
    \slabinline{prop-crr-preserve-terms} If $S$ is a set of terms such that for all
    $i \in \{1,\ldots,n\}$ and clauses~$C$ in $\cnf{F_i}$ it holds that
    $T\xtmax{C} \cap S \subseteq T\xtmaxneg{C}$, then for all clauses~$C$ in
    $\cnf{\bigvee_{i=1}^{n} F_i}$ it holds that $T\xtmax{C} \cap S \subseteq
    T\xtmaxneg{C}$.
    \slabinline{prop-drr-preserve-terms} If $S$ is a set of terms such that for all
    $i \in \{1,\ldots,n\}$ and conjunctive clauses~$D$ in $\dnf{F_i}$ it holds
    that $T\xtmax{D} \cap S \subseteq T\xtmaxpos{D}$, then for all conjunctive
    clauses~$D$ in $\dnf{\bigwedge_{i=1}^{n} F_i}$ it holds that $T\xtmax{D} \cap
    S \subseteq T\xtmaxpos{D}$.
\end{prop}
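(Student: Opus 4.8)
The plan is to reduce this proposition to Props.~\ref{prop-crr-preserve} and~\ref{prop-drr-preserve} by a relativization trick, rather than re-running the CNF/DNF-combination bookkeeping from scratch. The key observation is that the maps $T\tmaxfun$, $T\xtmaxpos{\cdot}$, $T\xtmaxneg{\cdot}$ differ from $\varfun$, $\vposfun$, $\vnegfun$ only in that they treat a $T$-term that is not nested inside another $T$-term exactly the way the variable-versions treat a free variable: an atomic syntactic unit whose ``polarity footprint'' in a clause is determined by the literals it occurs in (at $T$-maximal positions). So I would first record this formally as the single structural fact that drives everything: for an NNF formula $F$ built from literals by $\land$ and $\lor$, a $T$-term $t$ lies in $T\xtmaxneg{F}$ iff $t$ has a $T$-maximal occurrence in some negative literal of $F$, and $t\in T\tmaxfun(F)\setminus T\xtmaxpos{F}$ implies every $T$-maximal occurrence of $t$ is in a negative literal — which is the precise analogue of the property ``$x\in\vall{C}\setminus\vpos{C}$ means $x$ occurs only in negative literals'' that the proofs of Props.~\ref{prop-crr-preserve} and~\ref{prop-drr-preserve} rely on.

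For \slab{prop-crr-preserve-terms}: I would inspect the proof of Prop.~\ref{prop-crr-preserve}. That proof analyzes how a clause $C$ of $\cnf{\bigvee_{i=1}^n F_i}$ arises — by Prop.~\ref{prop-cnf-sub} each such clause is a clause of some $\cnf{F_j}$, and more carefully the distributive expansion of a disjunction of CNFs produces clauses that are disjunctions of clauses, one from each $\cnf{F_i}$ — and tracks which variables land in which literal-polarities. Every step of that argument refers to variables only through ``occurs in a literal of such-and-such polarity.'' I would replay that argument verbatim with ``variable $x$ occurring in literal $L$'' replaced by ``$T$-term $t$ occurring $T$-maximally in literal $L$.'' The hypothesis $T\xtmax{C}\cap S\subseteq T\xtmaxneg{C}$ for the $F_i$ then propagates through exactly as $\vall{C}\cap S\subseteq\vneg{C}$ did, because forming a clause $C_1\vee\cdots\vee C_n$ does not nest any $T$-term more deeply (it only juxtaposes literals) and hence does not change which occurrences of $T$-terms are $T$-maximal, nor their host literal's polarity. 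Part \slab{prop-drr-preserve-terms} is the order-dual: conjunctive clauses of $\dnf{\bigwedge F_i}$, positive polarity, via Prop.~\ref{prop-dnf-sub}, mirroring the proof of Prop.~\ref{prop-drr-preserve}; it follows either by the same replay or by applying \slab{prop-crr-preserve-terms} to $\lnot F_i$ using Prop.~\ref{prop-cnf-dnf-duality} together with the fact that $\dual{\cdot}$ preserves $T$-maximality of term-occurrences while swapping positive and negative polarity (so $T\xtmaxpos{F}=T\xtmaxneg{\lnot F}$ after the NNF/dual rewriting).

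The one point that needs genuine care — and is the main obstacle — is that the $\cnf{\cdot}$/$\dnf{\cdot}$ operators are only characterized axiomatically here, via Props.~\ref{prop-cnf-dnf-base}--\ref{prop-cnf-dnf-properties}, so I cannot reason about ``the'' CNF transformation concretely. I must check that the sub-clause/sub-conjunctive-clause containments of Prop.~\ref{prop-cnf-dnf-properties} (items \slab{prop-cnf-sub}--\slab{prop-dnf-unit}) together with the already-proved variable-level preservation items (\slab{prop-crr-preserve}, \slab{prop-drr-preserve}) are logically strong enough to yield the $T\tmaxfun$-versions without any extra assumption about how clauses are formed. Concretely the worry is whether a $T$-maximal occurrence of $t$ in some $F_i$ could become non-$T$-maximal in $\cnf{\bigvee_i F_i}$ (it cannot: the CNF-combination builds clauses by selecting and disjoining literals drawn, up to instancing, from the $\cnf{F_i}$, and disjunction introduces no new term-nesting), or conversely whether the combination could introduce a fresh $T$-maximal occurrence of some $t\in S$ in a positive literal not traceable to the $F_i$ (Prop.~\ref{prop-cnf-dnf-base} bounds $\voc{\cdot}$ and $\var{\cdot}$, and the sub-clause property bounds literals, so no). Once this is pinned down, the proof is a mechanical re-run of the two earlier proofs with ``variable'' uniformly reinterpreted as ``$T$-maximal term occurrence,'' and I would present it in that compressed form, citing Props.~\ref{prop-cnf-dnf-properties} and the proofs of \slab{prop-crr-preserve}, \slab{prop-drr-preserve} rather than repeating the distributive-expansion case analysis.
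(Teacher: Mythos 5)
Your proposal is correct, and it supplies exactly the justification the paper leaves implicit: the paper gives no proof of Prop.~\ref{prop-cnf-dnf-properties-termadapt}, treating it -- like Props.~\ref{prop-cnf-dnf-base}--\ref{prop-cnf-dnf-properties} -- as an ``easy to achieve'' syntactic property of the (distributive) CNF/DNF conversion rather than a consequence derivable from the earlier axioms alone. Your central observation, that a $T$-maximal occurrence of a $T$-term behaves under clause juxtaposition exactly like a variable occurrence (disjoining or conjoining literals neither nests terms more deeply nor changes the polarity of the host literal), is the right and essentially canonical argument, and your caveat that this is a further stipulation on the conversion rather than a formal corollary of Props.~\ref{prop-crr-preserve} and~\ref{prop-drr-preserve} is accurate.
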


The key to obtain range-restricted interpolants from \CTIF is that the tableau
must have a specific form, which we call \name{hyper}, as it resembles proofs
by hyperresolution \cite{robinson:hyper} and hypertableaux
\cite{hypertableaux}.
\begin{defn}
  A clausal tableau is called \defname{hyper} if the nodes labeled with a
  negative literal are exactly the leaf nodes.
\end{defn}
While hyperresolution and related approaches, e.g.,
\cite{robinson:hyper,satchmo,puhr,hypertableaux,bumg}, consider DAG-shaped
proofs with non-rigid variables, aiming at interpolant extraction we consider
the hyper property for tree-shaped proofs with rigid variables. The
\name{hyper} requirement is w.l.o.g. because arbitrary closed clausal tableaux
can be converted to tableaux with the hyper property, as we will see in
Sect.~\ref{sec-hyper-convert}.

The proof of Theorem~\ref{thm-ipol-rr} is based on three properties that
invariantly hold for all nodes, or for all inner nodes, respectively, stated
in the following lemma.
\begin{lem}%
  \label{lem-inv-all}
  Let $\ipctx$ be an interpolation context and assume a leaf-closed and hyper
  two-sided clausal ground tableau for $\FC$ and $\NGC$.

  \slab{lem-inv-c} If $F$ is \unirred, then for all nodes $N$ the property
  $\INVC{N}$ defined as follows holds: $\INVC{N}$ $\eqdef$ For all clauses $C$
  in $\cnf{\nipol{N}}$ it holds that
  $\vtmax{C} \cap \VG \subseteq \vtmaxneg{C} \cup \vtmaxpos{\npath{\LL}{N}}.$

  \slab{lem-inv-d} If $\lnot G$ is \unirred, then for all nodes $N$ the
  property $\INVD{N}$ defined as follows holds: $\INVD{N}$ $\eqdef$ For all
  conjunctive clauses $D$ in $\dnf{\nipol{N}}$ it holds that
  $\vtmax{D} \cap \VF \subseteq \vtmaxpos{D} \cup \vtmaxpos{\npath{\RR}{N}}.$

  \slab{lem-inv-x} If $\lnot G$ is \unirred and conditions~(1)--(3)
  Theorem~\ref{thm-ipol-vgt-vx} hold, then for all inner nodes $N$ the
  property $\INVX{N}$ defined as follows holds: $\INVX{N}$ $\eqdef$ For all
  conjunctive clauses $D$ in $\dnf{\nipol{N}}$ it holds that $\VC \subseteq
  \vtmaxpos{D} \cup \vtmaxpos{\npath{\RR}{N}}.$
\end{lem}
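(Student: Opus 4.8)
The three statements are invariants over the nodes of the tableau, so the natural route is structural induction on the tableau, going from the leaves upward to the root (equivalently, well-founded induction on the subtree rooted at $N$). For each of \slab{lem-inv-c}, \slab{lem-inv-d}, \slab{lem-inv-x} I would first observe that $\nipol{N}$ is defined by the same recursion in all cases, so the case split is always: (a) $N$ is a leaf; (b) $N$ is an inner node whose children have side $\LL$, so $\nipol{N} = \bigvee_i \nipol{N_i}$; (c) $N$ is an inner node whose children have side $\RR$, so $\nipol{N} = \bigwedge_i \nipol{N_i}$. The hyper assumption is what makes the leaf case tractable: a leaf is exactly a node carrying a negative literal, and for an inner node $N$ the clause $\nclause{N}$ is an instance of an input clause whose remaining, non-leaf children all carry positive literals; this is precisely the shape needed to feed Lemma~\ref{lem-clause-inst} into the induction step.

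\emph{Leaf case.} Here $\nipol{N}$ is one of $\false$, $\nlit{N}$, $\du{\nlit{N}}$, $\true$ according to the side table. For \slab{lem-inv-c}: if $\nipol{N} \in \{\false,\true\}$ the claim is vacuous; if $\nipol{N} = \nlit{N}$ then $\nside{N}=\LL$, and since $N$ is a leaf $\nlit{N}$ is negative, so $\cnf{\nipol{N}}$ is the single clause $\{\nlit{N}\}$ with $\vtmaxpos{} = \emptyset$, $\vtmax{} \subseteq \vtmaxneg{}$, done; if $\nipol{N} = \du{\nlit{N}}$ then $\nside{N}=\RR$ and $\nside{\ntgt{N}}=\LL$, so $\du{\nlit{N}}$ is a positive literal — here I must use that $N$ is a \emph{closing} leaf on side $\RR$ whose target is on side $\LL$, so $\du{\nlit{N}} = \nlit{\ntgt{N}}$ occurs on the $\LL$-path, giving $\vtmaxpos{\du{\nlit{N}}} \subseteq \vtmaxpos{\npathL{N}}$, which discharges the claim. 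The leaf cases of \slab{lem-inv-d} and \slab{lem-inv-x} are dual (for \slab{lem-inv-x}, recall $N$ ranges over \emph{inner} nodes, so there is no leaf base case — the base case is instead the minimal inner nodes all of whose children are leaves, handled uniformly by the inner-node argument below).

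\emph{Inner-node case.} Suppose the children $N_1,\dots,N_n$ have side $\LL$, so $\nclause{N}$ is an instance $C$ of a clause in $\FC$ and $\nipol{N} = \bigvee_i \nipol{N_i}$. By Prop.~\ref{prop-cnf-dnf-properties-termadapt}\ref{prop-crr-preserve-terms} with $S = \VG$, to establish $\INVC{N}$ it suffices to show the bound for $\cnf{\nipol{N_i}}$ for each $i$, i.e. $\vtmax{C'} \cap \VG \subseteq \vtmaxneg{C'} \cup \vtmaxpos{\npathL{N}}$ for each clause $C'$ in $\cnf{\nipol{N_i}}$. The induction hypothesis gives this with $\vtmaxpos{\npathL{N_i}}$ in place of $\vtmaxpos{\npathL{N}}$, and $\npathL{N_i}$ and $\npathL{N}$ differ exactly by the literal $\nlit{N_i}$ when $\nside{N_i}=\LL$ (which it is), so I must absorb the extra term set $\vtmaxpos{\nlit{N_i}}$. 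For the hyper children (all but at most the closing leaves) $\nlit{N_i}$ is positive, hence $\vtmaxpos{\npathL{N_i}} = \vtmaxpos{\npathL{N}} \cup \vtmaxpos{\nlit{N_i}}$, and the terms of $\nlit{N_i}$ are among the terms of $C = \nclause{N}$; combining with Lemma~\ref{lem-clause-inst}\ref{lem-i:u} (which says $\vtmax{C} \cap \VG \subseteq \vtmaxneg{C}$ — the extra positive-literal terms contribute no new $\VG$-maximal terms beyond negative ones) closes the gap. The symmetric side (children on side $\RR$, $\nipol{N}$ a conjunction) uses Prop.~\ref{prop-cnf-dnf-properties-termadapt}\ref{prop-drr-preserve-terms}, Lemma~\ref{lem-clause-inst}\ref{lem-d:i:e}, and the dual bookkeeping on $\npathR{}$; \slab{lem-inv-d} and \slab{lem-inv-x} proceed the same way, the latter additionally invoking Lemma~\ref{lem-clause-inst}\ref{lem-x:i:noneg}–\ref{lem-x:i:x} to handle the $\VC$ terms and the ``no purely negative clause'' conditions.

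\emph{Main obstacle.} The delicate point is the path bookkeeping together with the interaction between the $S\tmaxfun$ (``$S$-maximal'') operators and clausification. One has to be careful that taking $\cnf{}$ or $\dnf{}$ of $\nipol{N_i}$ and then of the disjunction/conjunction does not create a variable/term occurrence outside an $\VF$- or $\VG$-term that was not already ``covered'' — this is exactly what Prop.~\ref{prop-cnf-dnf-properties-termadapt} is for, but one must check its hypotheses are met uniformly, in particular that literals $\nlit{N_i}$ added to the path are positive (the hyper property) and that their terms are genuinely subterms of the tableau clause instance $C$, so that Lemma~\ref{lem-clause-inst} applies to them. Keeping the three invariants' inductions in lockstep — and, for \slab{lem-inv-x}, correctly treating inner nodes whose children are leaves as the base case — is where the bulk of the care goes; the algebra itself is routine once the right lemma is cited at each step.
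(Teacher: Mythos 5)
Your overall strategy coincides with the paper's: structural induction on the (leaf-closed, hyper) tableau, the same three-way case split, the same leaf analysis, and the same supporting lemmas (Lemma~\ref{lem-clause-inst}, Prop.~\ref{prop-cnf-dnf-properties-termadapt}). The leaf case and the ``easy'' inner case (where the path and the CNF/DNF structure are simply inherited via Prop.~\ref{prop-cnf-sub} resp.\ Prop.~\ref{prop-dnf-sub}) are fine. However, the crux of the hard inner case is compressed into ``closes the gap'', and the missing links are exactly where the hypotheses of the lemma are consumed. Concretely, for \ref{lem-inv-c} with children of side $\LL$: after you reduce to a term $t\in\VG$ with $t \in \vtmaxpos{\nlit{N_i}}$ for a positive literal $\nlit{N_i}$ of $\nclause{N}$ and invoke Lemma~\ref{lem-i:u} to find a \emph{negative} literal $\nlit{N_j}$ of $\nclause{N}$ with $t\in\vtmax{\nlit{N_j}}$, you are not yet done: the invariant speaks about $\vtmaxneg{C}$ for clauses $C$ of $\cnf{\nipol{N}}$, not about literals of the tableau clause. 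You still need that (i) by the hyper property $N_j$ is a leaf and by closedness a closing one; (ii) since $t\notin\vtmaxpos{\npathL{N}}$, the target $\ntgt{N_j}$ cannot have side $\LL$ (else its positive literal $\du{\nlit{N_j}}$ would put $t$ on the $\LL$-path), hence $\nipol{N_j}=\nlit{N_j}$ rather than $\false$; and (iii) by Prop.~\ref{prop-cnf-unit} this negative literal disjunct lands in \emph{every} clause of $\cnf{\nipol{N}}$, giving $t\in\vtmaxneg{C}$. Without (ii) the negative literal might contribute nothing to $\nipol{N}$ at all, and the argument does not close.

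A second, independent gap concerns \ref{lem-inv-x}. Your remark that its base case (inner nodes all of whose children are leaves) is ``handled uniformly by the inner-node argument'' is not tenable: the inductive step needs $\INVX{N_i}$ for the children, and $\INVX{}$ is neither claimed nor true for leaves (e.g.\ a leaf with $\nipol{}=\true$ cannot satisfy $\VC\subseteq\vtmaxpos{D}\cup\vtmaxpos{\npathR{N}}$). The base case therefore requires a genuinely different argument: for a node $N$ whose clause is all-negative, one uses condition~(2) of Theorem~\ref{thm-ipol-vgt-vx} via Lemma~\ref{lem-x:i:neg} to get $\VC\subseteq\vtmax{\nclause{N}}$ and then pushes each such term into $\vtmaxpos{\nipol{N}}$ or $\vtmaxpos{\npathR{N}}$ by inspecting the targets of the leaf children; one also needs condition~(1) (via Lemma~\ref{lem-x:i:noneg}) to know such a clause has side $\RR$, and a separate observation (Lemma~\ref{lem-above-negative-clause} in the paper) that every inner node lies above some all-negative clause, so that this really is an adequate base case. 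These are not bookkeeping details — they are the only places where conditions (1) and (2) of Theorem~\ref{thm-ipol-vgt-vx} enter the proof.
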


Each of Lemma~\ref{lem-inv-c}, \ref{lem-inv-d} and~\ref{lem-inv-x} can be
proven independently by an induction on the tableau structure, but for the
same tableau, such that the properties claimed by them can be combined. In
proving these three sub-lemmas it is sufficient to use their respective
preconditions only to justify the application of matching sub-lemmas of
Lemma~\ref{lem-clause-inst}. That lemma might thus be seen as an abstract
interface that delivers everything that depends on these preconditions and is
relevant for Theorem~\ref{thm-ipol-rr}.

We show here the proof of Lemma~\ref{lem-inv-c}. Lemma~\ref{lem-inv-d} can be
proven in full analogy. The proof of Lemma~\ref{lem-inv-x} is deferred to
\appref{app:proof:lem:inv:x}.
In general, recall that the tableau in Lemma~\ref{lem-inv-all} is a two-sided
tableau for $\FC$ and $\NGC$ that is leaf-closed and hyper. Hence literal
labels of leaves are negative, while those of inner nodes are positive. All
tableau clauses are ground and with an associated \name{side} in $\{\LL,\RR\}$
such that a tableau clause with side $\LL$ is an instance of a clause in $\FC$
and one with side $\RR$ is an instance of a clause in $\NGC$.

\begin{proof}[Lemma~\ref{lem-inv-c}]
  \prlReset{lem-inv-c} By induction on the tableau structure.

\textit{Base case where $N$ is a leaf.} If $N$ and $\ntgt{N}$ have the same
side, then $\nipol{N}$ is a truth value constant, hence $\vtmax{\nipol{N}} =
\emptyset$, implying $\INVC{N}$. If~$N$ has side $\LL$ and $\ntgt{N}$ has side
$\RR$, then $\nipol{N} = \nlit{N}$, which, because $N$ is a leaf, is a
negative literal. Thus $\vtmax{\nipol{N}} = \vtmaxneg{\nipol{N}}$, which
implies $\INVC{N}$. If~$N$ has side $\RR$ and $\ntgt{N}$ has side $\LL$, then
$\nipol{N} = \nlit{\ntgt{N}}$, which, because $N$ is a leaf, is a positive
literal. Thus $\vtmax{\nipol{N}} \subseteq\linebreak
\vtmaxpos{\npath{\LL}{N}}$, implying $\INVC{N}$.

\textit{Induction Step.}  Let $N_1, \ldots, N_n$, where $1 \leq n$, be the
children of $N$.  Assume as induction hypothesis that for $i \in
\{1,\ldots,n\}$ it holds that $\INVC{N_i}$.  Consider the case where the side
of the children is $\LL$.  Then
    \begin{enumerate}[left=\prenum]
    \item[\prl{pr:n:disj}] $\nipol{N} = \bigvee_{i = 1}^{n}  \nipol{N_i}$.
    \end{enumerate}
    Assume that $\INVC{N}$ does not hold. Then there exists a clause~$K$ in
    $\cnf{\nipol{N}}$ and a term $t$ such that \prl{pr:ni1:universal} $t \in
    \VG$; \prl{pr:ni1:open} $t \in \vtmax{K}$; \prl{pr:ni1:litnew} $t \notin
    \vtmaxneg{K}$; \prl{pr:ni1:brnew}~$t \notin \vtmaxpos{\npath{\LL}{N}}$.
    To derive a contradiction, we first show that given
    \pref{pr:ni1:universal}, \pref{pr:ni1:litnew} and \pref{pr:ni1:brnew} it
    holds that
\begin{enumerate}[left=\prenum]
\item[\prl{pr:lem:children}] For all children $N'$ of $N$:
  $t \notin \vtmaxpos{\npath{\LL}{N'}}$.
\end{enumerate}
Statement~\pref{pr:lem:children} can be proven as follows. Assume to the
contrary that there is a child $N'$ of $N$ such that $t \in
\vtmaxpos{\npath{\LL}{N'}}$.  By \pref{pr:ni1:brnew} it follows that $t \in
\vtmax{\nlit{N'}}$ and $\nlit{N'}$ is positive.  By Lemma~\ref{lem-i:u}
and~\pref{pr:ni1:universal} there is another child $N''$ of $N$ such that
$\nlit{N''}$ is negative and $t \in \vtmax{\nlit{N''}}$.  Since the tableau is
closed, it follows from \pref{pr:ni1:brnew} that $\ntgt{N''}$ has side $\RR$,
which implies that $\nipol{N''} = \nlit{N''}$.  Hence $t \in
\vtmax{\nipol{N''}}$.  Since $\nipol{N''}$ is a negative literal and a disjunct
of $\nipol{N}$, it follows from \pref{pr:n:disj} and Prop.~\ref{prop-cnf-unit}
that for all clauses $C$ in $\cnf{\nipol{N}}$ it holds that $t \in \vtmaxneg{C}$,
contradicting assumption~\pref{pr:ni1:litnew}. Hence~\pref{pr:lem:children}
must hold.

From \pref{pr:lem:children}, \pref{pr:ni1:universal} and the induction
hypothesis it follows that for all children~$N'$ of~$N$ and clauses~$C'$ in
$\cnf{\nipol{N'}}$ it holds that $\vtmax{C'} \cap \{t\} \subseteq \vtmaxneg{C'}$.
Hence, by~\pref{pr:n:disj} and Prop.~\ref{prop-crr-preserve-terms} it follows that
for all clauses~$C$ in $\cnf{\nipol{N}}$ it holds that $\vtmax{C} \cap \{t\}
\subseteq \vtmaxneg{C}$.  This, however, contradicts our assumption of the
existence of a clause $K$ in $\cnf{\nipol{N}}$ that satisfies
\pref{pr:ni1:open} and~\pref{pr:ni1:litnew}.  Hence $\INVC{N}$ must hold.

We conclude the proof of the induction step for $\INVC{N}$ by considering the
case where the side of the children of $N$ is $\RR$.  Then
\begin{enumerate}[left=\prenum]
\item[\prl{pr:n:conj}] $\nipol{N} = \bigwedge_{i = 1}^{n}  \nipol{N_i}$.
\item[\prl{pr:i2:aux}] For all children $N'$ of $N$: $\npath{\LL}{N} =
  \npath{\LL}{N'}$.
\end{enumerate}
$\INVC{N}$ follows from the induction hypothesis, \pref{pr:i2:aux},
\pref{pr:n:conj} and Prop.~\ref{prop-cnf-sub}.
\qed
\end{proof}

The invariant properties of tableau nodes shown in
Lemmas~\ref{lem-inv-c}--\ref{lem-inv-x} apply in particular to the tableau
root. We now apply this to prove Theorem~\ref{thm-ipol-rr}.
\begin{proof}[Theorem~\ref{thm-ipol-rr}]
  Interpolants with the stated properties are obtained with \CTIF, assuming
  w.l.o.g. that the CNF computed in step~\ref{step-pre} meets the requirement
  of Sect.~\ref{sec-cnf-dnf}, and that the closed clausal tableau computed in
  step~\ref{step-tab} is leaf-closed and has the hyper property. That \CTIF
  constructs a Craig-Lyndon interpolant has been shown in \cite{cw:ipol}. It
  remains to show the further claimed properties of the interpolant. Let
  $\ipctx$ be the interpolation context for the input formulas $F$ and $G$ and
  let $N_0$ be the root of the tableau computed in step~\ref{step-tab}. Since
  $N_0$ is the root, $\npath{\LL}{N_0} = \npath{\RR}{N_0} = \true$ and thus
  the expressions $\vtmaxpos{\npath{\LL}{N_0}}$ and
  $\vtmaxpos{\npath{\RR}{N_0}}$ in the specifications of $\INVC{N_0}$,
  $\INVD{N_0}$ and $\INVX{N_0}$ all denote the empty set. The claims made in
  the particular sub-theorems can then be shown as follows.

  (\ref{thm-ipol-semi}) By Lemma~\ref{lem-inv-c} it follows that $\INVC{N_0}$.
  Hence, for all clauses~$C$ in $\cnf{\nipol{N_0}}$ it holds that $\vtmax{C}
  \cap \VU \subseteq \vtmaxneg{C}$. It follows that the result of the
  interpolant lifting (step~\ref{step-lifting}) of \CTIF applied to
  $\nipol{N_0}$ is \unirred. Placeholder constant replacement
  (step~\ref{step-constants-to-vars}) does not alter this.

  (\ref{thm-ipol-vgt}) As for Theorem~\ref{thm-ipol-semi} it follows that for
  all clauses~$C$ in $\cnf{\nipol{N_0}}$ it holds that $\vtmax{C} \cap \VU
  \subseteq \vtmaxneg{C}$. By Lemma~\ref{lem-inv-d} it follows that
  $\INVD{N_0}$. Hence, for all conjunctive clauses~$D$ in $\dnf{\nipol{N_0}}$
  it holds that $\vtmax{D} \cap \VE \subseteq \vtmaxpos{D}$. It follows that
  the result of the interpolant lifting of \CTIF applied to $\nipol{N_0}$ is
  \unirred. Since $F$ and $G$ have no free variables, placeholder constant
  replacement has no effect.

  (\ref{thm-ipol-vgt-vx}) As for Theorem~\ref{thm-ipol-vgt} it follows that
  for all clauses~$C$ in $\cnf{\nipol{N_0}}$ it holds that $\vtmax{C} \cap \VU
  \subseteq \vtmaxneg{C}$ and for all conjunctive clauses~$D$ in
  $\dnf{\nipol{N_0}}$ it holds that $\vtmax{D} \cap \VE \subseteq
  \vtmaxpos{D}$. By Lemma~\ref{lem-inv-x} it follows that $\INVX{N_0}$. Hence,
  for all conjunctive clauses~$D$ in $\dnf{\nipol{N_0}}$ it holds that $\VC
  \subseteq \vtmaxpos{D}$. It follows that the result of the interpolant
  lifting of \CTIF applied to $\nipol{N_0}$ followed by placeholder constant
  replacement, now applied to $\VC$, is \vgtrred. \qed
\end{proof}

\section{Horn Interpolation}
\label{sec-horn}

A \defname{Horn clause} is a clause with at most one positive literal. A
\defname{Horn formula} is built up from Horn clauses with the connectives
$\land$, $\exists$ and $\forall$. Horn formulas are important in countless
theoretical and practical respects. Our interpolation method on the basis of
clausal tableaux with the hyper property can be applied to obtain a Horn
interpolant under the precondition that the first argument formula $F$ of the
interpolation problem is Horn. The following theorem makes this precise. It
can be proven by an induction on the structure of a clausal tableau with the
hyper property (see \appref{app:horn}).

\begin{thm}[Interpolation from a Horn Formula]
  \label{thm-ipol-horn}
  Let $F$ be a Horn formula and let $G$ be a formula such that $F\entails G.$
  Then there exists a Craig-Lyndon interpolant~$H$ of $F$ and $G$ that is a
  Horn formula. Moreover, $H$
  can be effectively constructed from a clausal tableau proof
  of $F \entails G$.
\end{thm}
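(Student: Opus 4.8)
The plan is to mimic the proof architecture already used for Theorem~\ref{thm-ipol-rr}: establish an invariant on the nodes of a leaf-closed hyper two-sided clausal tableau that, applied at the root, yields the Horn property of the extracted interpolant. First I would recall the setup from the excerpt: using \CTIF with a tableau computed in step~\ref{step-tab} that is leaf-closed and hyper (w.l.o.g.\ by the conversion in Sect.~\ref{sec-hyper-convert}), the ground interpolant $\HG = \nipol{N_0}$ is built by the inductive tables, and lifting (step~\ref{step-lifting}) followed by placeholder-constant replacement (step~\ref{step-constants-to-vars}) only prepends quantifiers and uniformly renames maximal $\fgs$-terms and constants to variables. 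Since replacing ground terms by variables never changes the number of positive literals in a clause, it suffices to show that the NNF formula $\nipol{N_0}$, brought to CNF via $\cnf{\cdot}$ as in Sect.~\ref{sec-cnf-dnf}, has only Horn clauses; by Prop.~\ref{prop-cnf-dnf-base} the resulting quantified formula is then a Horn formula in the sense defined here.

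The key step is the induction on the tableau structure proving the invariant
\[
  \INV{N} \;\eqdef\; \text{every clause in } \cnf{\nipol{N}} \text{ is Horn},
\]
under the standing assumption that $F$, hence $\FC$, is Horn. For the base case, $\nipol{N}$ is either a truth-value constant (so $\cnf{\nipol{N}}$ is trivially Horn), or it is $\nlit{N}$ with $N$ a leaf — but a leaf of a hyper tableau carries a \emph{negative} literal, so $\nipol{N}$ is a single negative unit clause, which is Horn — or it is $\du{\nlit{\ntgt{N}}}$, which is again a single literal, hence a Horn clause regardless of sign. For the induction step with children $N_1,\dots,N_n$ of side $\LL$, we have $\nipol{N} = \bigvee_{i=1}^n \nipol{N_i}$; here the crucial observation from the hyper property is that exactly one child $N_j$ is an inner node (its literal positive), while all others are leaves (their literals negative), because $\nclause{N}$ is an instance of a Horn clause in $\FC$ and thus has at most one positive literal, and in a hyper tableau the positive literals are precisely the inner nodes. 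Using Prop.~\ref{prop-cnf-sub} and Prop.~\ref{prop-cnf-unit}: the literal children contribute negative unit disjuncts, so in $\cnf{\bigvee_i \nipol{N_i}}$ each clause is the union of the negative literals from the leaf-children with a clause from $\cnf{\nipol{N_j}}$ (and $\cnf{\nipol{N_i}}$ for a closed leaf-child is just a negative unit, by the base case); by the induction hypothesis the latter is Horn, and prepending negative literals preserves the Horn property. For children of side $\RR$ we have $\nipol{N} = \bigwedge_{i=1}^n \nipol{N_i}$, and by Prop.~\ref{prop-cnf-sub} every clause in $\cnf{\nipol{N}}$ already occurs in some $\cnf{\nipol{N_i}}$, so $\INV{N}$ follows directly from the induction hypothesis.

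Applying $\INV{N_0}$ at the root shows $\cnf{\nipol{N_0}}$ is Horn; lifting and constant replacement then give the Horn formula $H$, which \CTIF already guarantees is a Craig-Lyndon interpolant of $F$ and $G$, and the construction is effective as required. The main obstacle I anticipate is the side-$\LL$ inner case: one must be careful that the hyper property together with the Horn shape of the tableau clause really forces \emph{at most one} non-leaf child, and that the CNF-combination lemmas (\ref{prop-cnf-sub}, \ref{prop-cnf-unit}) are strong enough to conclude that merging one Horn clause with several negative units stays Horn — the subtlety is purely about counting positive literals across the disjunction, and it hinges on all but one disjunct being a \emph{negative} literal, which is exactly what hyper plus Horn delivers. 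A secondary point worth stating explicitly is why the side-$\RR$ (conjunctive) case imposes no constraint on $G$: intersection with $\cnf{\nipol{N_i}}$ can only drop clauses, never create non-Horn ones, which is why the theorem needs the Horn hypothesis on $F$ alone.
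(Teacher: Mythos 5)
Your proposal is correct and follows essentially the same route as the paper's proof: an induction on the leaf-closed hyper tableau, splitting on the side of the children and using that a side-$\LL$ tableau clause, being an instance of a Horn clause of $F'$, has at most one positive (hence inner, by the hyper property) child, while all remaining children are negative leaves contributing negative unit disjuncts. The only cosmetic difference is that the paper phrases the induction invariant via an auxiliary class of ``Horn-like'' NNF formulas (converted to a conjunction of Horn clauses only at the very end), whereas you state it directly as ``every clause of $\cnf{\nipol{N}}$ is Horn,'' which additionally relies on a true but unstated distribution property of $\cnf{\cdot}$ going slightly beyond Props.~\ref{prop-cnf-sub} and~\ref{prop-cnf-unit}.
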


An apparently weaker property than Theorem~\ref{thm-ipol-horn} has been shown
in \cite[\S~4]{mcnulty:uhorn} with techniques from model theory: For
\emph{two} universal Horn formulas $F$ and $G$ there exists a universal Horn
formula that is like a Craig interpolant, except that function symbols are not
constrained. A \name{universal} Horn formula is there a prenex formula with
only universal quantifiers and a Horn matrix. For \CTIF, the corresponding
strengthening of the interpolant to a universal formula can be read-off from
the specification of interpolant lifting (step~\ref{step-lifting} in
Fig.~\ref{fig-ctif}).

The following corollary shows that Theorem~\ref{thm-ipol-horn} can be
combined with Theorem~\ref{thm-ipol-rr} to obtain interpolants that are both
Horn and range-restricted.
\begin{coro}[Range-Restricted Horn Interpolants]
  \label{coro-rr-horn}
  Theorems~\ref{thm-ipol-semi}, \ref{thm-ipol-vgt} and~\ref{thm-ipol-vgt-vx}
  can be strengthened: If $F$ is a Horn formula, then there
  exists a Craig-Lyndon
  interpolant~$H$ with the properties shown in the respective theorem and the
  additional property that it is Horn. Moreover, $H$
  can be effectively constructed from a clausal tableau proof
  of $F \entails G$.
\end{coro}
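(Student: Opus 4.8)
The plan is to combine the two already-established ingredients without reproving either from scratch. The key observation is that both Theorem~\ref{thm-ipol-rr} and Theorem~\ref{thm-ipol-horn} are obtained from the \emph{same} \CTIF\ run: in both cases one assumes w.l.o.g.\ that the CNF/DNF produced in step~\ref{step-pre} has the syntactic properties of Sect.~\ref{sec-cnf-dnf} and that the tableau computed in step~\ref{step-tab} is leaf-closed and \name{hyper}. So first I would fix a single such \CTIF\ run for the given $F$ and $G$, with $F$ both \unirred\ (resp.\ satisfying the hypotheses of the relevant sub-theorem) \emph{and} Horn, and with the interpolation context $\ipctx$ it induces. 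The resulting ground interpolant $\HG = \nipol{N_0}$ and its lifting $H$ are then a single fixed formula, and it suffices to show that this one formula simultaneously has the range-restriction property proved in the proof of Theorem~\ref{thm-ipol-rr} and the Horn property proved in (the appendix proof of) Theorem~\ref{thm-ipol-horn}.

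Next I would invoke the two theorems as black boxes on this shared run. The proof of Theorem~\ref{thm-ipol-rr} given in the body shows, via Lemma~\ref{lem-inv-all} applied to the root $N_0$ (where both $\npath{\LL}{N_0}$ and $\npath{\RR}{N_0}$ are $\true$, so the path-terms vanish), that $\cnf{\nipol{N_0}}$ and $\dnf{\nipol{N_0}}$ satisfy exactly the clause/conjunctive-clause conditions required for the lifting step to produce a \unirred\ (resp.\ \vgtrred) formula; placeholder-constant replacement preserves this. That argument uses $F$'s range-restriction only through the matching sub-lemmas of Lemma~\ref{lem-clause-inst}, so it is untouched by additionally assuming $F$ Horn. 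Dually, the proof of Theorem~\ref{thm-ipol-horn} (App.~\ref{app:horn}) is an induction on the hyper tableau showing that $\nipol{N}$ is Horn for every node, using only that side-$\LL$ clauses are instances of Horn clauses of $\FC$ (which holds because $F$ is Horn and clausification of a Horn formula yields Horn clauses) together with the hyper restriction that inner nodes carry positive literals and leaves carry negative ones. Neither induction interferes with the other: they run over the same tableau, they establish properties of the same formulas $\nipol{N}$, and their hypotheses are compatible. Hence $\HG$ is Horn, and its lifting is range-restricted in the sense claimed by the invoked sub-theorem.

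The only genuine point that needs care — the potential main obstacle — is that \emph{interpolant lifting (step~\ref{step-lifting}) and placeholder-constant replacement (step~\ref{step-constants-to-vars}) preserve the Horn property}. Lifting replaces $\fgs$-maximal ground subterms by fresh variables and prepends a quantifier prefix; this is a literal-by-literal substitution that does not change the propositional (conjunction/disjunction/literal) skeleton of $\HG'$ relative to $\HG$, and prefixing quantifiers $\exists/\forall$ does not affect Hornness since a Horn formula is closed under $\exists$ and $\forall$ by definition. Placeholder-constant replacement is likewise a variable-for-constant substitution leaving the skeleton intact. I would state this as a short explicit remark (or fold it into the appendix proof of Theorem~\ref{thm-ipol-horn}, phrased for $H$ rather than only for $\HG$) so that the corollary's claim about the lifted interpolant $H$ — not merely the ground $\HG$ — is fully justified. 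With that in hand the corollary follows immediately: apply the relevant sub-theorem of Theorem~\ref{thm-ipol-rr} and Theorem~\ref{thm-ipol-horn} to the common \CTIF\ run, and conjoin their conclusions. \qed
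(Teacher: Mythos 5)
Your proposal matches the paper's own (very terse) proof: the paper simply observes that the corollary follows by combining the proofs of Theorems~\ref{thm-ipol-semi}--\ref{thm-ipol-vgt-vx} with that of Theorem~\ref{thm-ipol-horn}, the point being exactly the one you make — both arguments are inductions over the \emph{same} leaf-closed hyper tableau from a single \CTIF run, so their conclusions about $\nipol{N_0}$ hold simultaneously. Your additional care about lifting and placeholder replacement preserving Hornness is a sound elaboration of a detail the paper delegates to the appendix proof of Theorem~\ref{thm-ipol-horn} (where the Horn-like matrix is converted to Horn clauses with all relevant syntactic properties preserved).
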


\begin{proof}
  Can be shown by combining the proof of Theorem~\ref{thm-ipol-semi},
  \ref{thm-ipol-vgt} and~\ref{thm-ipol-vgt-vx}, respectively, with the proof
  of interpolation from a Horn sentence, Theorem~\ref{thm-ipol-horn}. The
  combined proofs are based on inductions on the same closed tableau with
  the hyper property. \qed
\end{proof}

\section{Obtaining Proofs with the Hyper Property}
\label{sec-hyper-convert}

\renewcommand{\tableauscale}{0.7}

Our new interpolation theorems, Theorems~\ref{thm-ipol-rr}
and~\ref{thm-ipol-horn}, depend on the hyper property of the underlying closed
clausal tableaux from which interpolants are extracted. We present a proof
transformation that converts any closed clausal tableau to one with the hyper
property. The transformation can be applied to a clausal tableau as obtained
directly from a clausal tableaux prover. Moreover, it can be also be
indirectly applied to a resolution proof. To this end, the resolution
deduction \emph{tree} \cite{chang:lee} of the binary resolution proof is first
translated to a closed clausal ground tableau in \name{cut normal form}
\cite[Sect. 7.2.2]{letz:habil}.
There the inner
clauses are atomic cuts, tautologies of the form $\lnot p(t_1,\ldots,t_n) \lor
p(t_1,\ldots,t_n)$ or $p(t_1,\ldots,t_n) \lor \lnot p(t_1,\ldots,t_n)$,
corresponding to literals upon which a (tree) resolution step has been
performed. Clauses of nodes whose children are leaves are instances of input
clauses. Our hyper conversion can then be applied to the tableau in cut normal
form. It is easy to see that a regular leaf-closed tableau with the hyper
property can not have atomic cuts. Hence the conversion might be viewed as an
elimination method for these cuts.

We specify the hyper conversion in Fig.~\ref{fig-proc-hyper} as a procedure
that destructively manipulates a tableau. A \defname{fresh copy} of an ordered
tree~$T$ is there an ordered tree~$T^\prime$ with fresh nodes and edges,
related to $T$ through a bijection~$c$ such that any node $N$ of $T$ has the
same labels (literal label and side label) as node $c(N)$ of $T^\prime$ and
such that the $i$-th edge originating in node $N$ of $T$ ends in node~$M$ if
and only if the $i$-th edge originating in node $c(N)$ of $T^\prime$ ends in
node $c(M)$. The procedure is performed as an iteration that in each round
chooses an inner node with negative literal label and then modifies the
tableau. Hence, at termination there is no inner node with negative literal,
which means that the tableau is hyper. Termination of the procedure can be
shown with a measure that strictly decreases in each round
(Prop.~\ref{proc-hyper-terminates} in \appref{app:proc:hyper:terminates}).
Figures~\ref{fig-hyper-conv-1} and~\ref{fig-hyper-conv-2} show example
applications of the procedure.

\begin{figure}
  \centering
  \fbox{
  \begin{minipage}{0.96\textwidth}
\algoinput A closed clausal tableau.

\algoskip

\algomethod Simplify the tableau to leaf-closing and regular form
(Sect.~\ref{sec-tableaux}). Repeat the following operations until the
resulting tableau is hyper.
\begin{enumerate}
\item \label{step-proc-pick} Let $N^\prime$ be the first node visited in
  pre-order with a child that is an inner node with a negative literal
  label. Let $N$ be the leftmost such child.

\item Create a fresh copy~$U$ of the subtree rooted at $N^\prime$. In~$U$
  remove the edges that originate in the node corresponding to $N$.

\item \label{step-proc-attach} Replace the edges originating in $N^\prime$
  with the edges originating in $N$.

\item \label{step-proc-fix} For each leaf descendant~$M$ of $N^\prime$ with
  $\nlit{M} = \du{\nlit{N}}$: Create a fresh copy~$U^\prime$ of $U$. Change
  the origin of the edges originating in the root of $U^\prime$ to~$M$.

\item \label{step-proc-simp} Simplify the tableau to leaf-closing and regular
  form (Sect.~\ref{sec-tableaux}).
\end{enumerate}

\algooutput A leaf-closed, regular and hyper clausal tableau whose clauses are
clauses of the input tableau.
\end{minipage}
  }

\caption{The \name{hyper conversion} proof transformation procedure.}
\label{fig-proc-hyper}
\end{figure}

\begin{figure}
  \centering
\begin{tikzpicture}[scale=0.85, %
    baseline=(a.north), sibling distance=4em,level distance=7ex, every
    node/.style = {transform shape,anchor=mid}]
  \node (a) {\vbar\textbullet}
  child { node {$\lnot \fq$}
    child { node {$\lnot \fp$}
      child { node {$\fp$} } }
    child { node {$\fq$} }
  };
\end{tikzpicture}
\raisebox{-7ex}{$\;\;\rewrite\;\;$}
\begin{tikzpicture}[scale=0.85, %
    baseline=(a.north), sibling distance=4em,level distance=7ex, every
    node/.style = {transform shape,anchor=mid}]
  \node (a) {\vbar\textbullet}
  child { node {$\lnot \fp$}
    child { node {$\fp$} }
  }
  child { node {$\fq$}
    child { node {$\lnot \fq$} }
  };
\end{tikzpicture}
\raisebox{-7ex}{$\;\;\rewrite\;\;$}
\begin{tikzpicture}[scale=0.85, %
    baseline=(a.north), sibling distance=4em,level distance=7ex, every
    node/.style = {transform shape,anchor=mid}]
  \node (a) {\vbar\textbullet}
  child { node {$\fp$}
    child { node {$\lnot \fp$}
    }
    child { node {$\fq$}
      child { node {$\lnot \fq$} }
    }
  };
\end{tikzpicture}
\caption{Hyper conversion of a closed clausal tableau in two rounds.}
\label{fig-hyper-conv-1}
\end{figure}

\begin{figure}
  \centering
\begin{tikzpicture}[scale=0.85, %
    baseline=(a.north), sibling distance=4em,level distance=6ex, every
    node/.style = {transform shape,anchor=mid}]
  \node (a) {\vbar\textbullet}
  child { node {$\lnot \fq$}
    [sibling distance=3em]
    child { node {$\lnot \fp$}
      child { node {$\fp$} } }
    child { node {$\fp$}
      [sibling distance=2em]
      child { node {$\lnot \fp$} }
      child { node {$\fq$} }
    }
  }
  child { node {$\fq$}
    child { node {$\lnot \fq$} }
  };
\end{tikzpicture}
\raisebox{-7ex}{$\;\rewrite\;$}
\begin{tikzpicture}[scale=0.85, %
    baseline=(a.north), sibling distance=4em,level distance=6ex, every
    node/.style = {transform shape,anchor=mid}]
  \node (a) {\vbar\textbullet}
  child { node {$\lnot \fp$}
    child { node {$\fp$} }
  }
  child { node {$\fp$}
    [sibling distance=3em]
    child { node {$\lnot \fp$} }
    child { node {$\fq$}
      child { node {$\lnot \fq$} }
      child { node {$\fq$}
        child { node {$\lnot \fq$} }
    }}
  };
\end{tikzpicture}
\hspace{-1.0em}%
\raisebox{-7ex}{$\rewritereg\;$}
\begin{tikzpicture}[scale=0.85, %
    baseline=(a.north), sibling distance=4em,level distance=6ex, every
    node/.style = {transform shape,anchor=mid}]
  \node (a) {\vbar\textbullet}
  child { node {$\lnot \fp$}
    child { node {$\fp$} }
  }
  child { node {$\fp$}
    [sibling distance=3em]
    child { node {$\lnot \fp$} }
    child { node {$\fq$}
      child { node {$\lnot \fq$} }
    }
  };
\end{tikzpicture}
\hspace{-0.4em}%
\raisebox{-7ex}{$\rewrite$}
\hspace{-0.4em}%
\begin{tikzpicture}[scale=0.85, %
    baseline=(a.north), sibling distance=3em,level distance=6ex, every
    node/.style = {transform shape,anchor=mid}]
  \node (a) {\vbar\textbullet}
  child { node {$\fp$}
    child { node {$\lnot \fp$}
    }
    child { node {$\fp$}
      child { node {$\lnot \fp$} }
      child { node {$\fq$}
        child { node {$\lnot \fq$} }
      }
    }
  };
\end{tikzpicture}
\hspace{-0.4em}%
\raisebox{-7ex}{$\rewritereg$}
\hspace{-0.4em}%
\begin{tikzpicture}[scale=0.85, %
    baseline=(a.north), sibling distance=3em,level distance=6ex, every
    node/.style = {transform shape,anchor=mid}]
  \node (a) {\vbar\textbullet}
  child { node {$\fp$}
    child { node {$\lnot \fp$}
    }
    child { node {$\fq$}
      child { node {$\lnot \fq$} }
    }
  };
\end{tikzpicture}
\caption{Hyper conversion of a closed clausal tableau in cut normal form in
  two rounds. For each round the result after procedure
  steps~\ref{step-proc-pick}--\ref{step-proc-fix} is shown and then the result
  after step~\ref{step-proc-simp}, simplification, applied here to achieve
  regularity.}
\label{fig-hyper-conv-2}
\end{figure}

Since the hyper conversion procedure copies parts of subtrees it is not a
polynomial operation.\footnote{A thorough complexity analysis should take
calculus- or strategy-dependent properties of the input proofs into account.
And possibly also the blow-up from resolution to tree resolution underlying
the cut normal form tableaux.} To get an idea of its practical feasibility, we
experimented with an unbiased set of proofs of miscellaneous problems. For
this we took those 112 \name{CASC-J11} \cite{casc:2022} problems that could be
proven with \ProverN \cite{prover9} in 400~s per problem, including a basic
proof conversion with \ProverN's tool \Prooftrans.\footnote{On a Linux
notebook with 12th Gen Intel\textsuperscript{\textregistered}
Core\texttrademark\ i7-1260P CPU and 32~GB~RAM.} The hyper conversion
succeeded on 107 (or 96\%) of these, given 400~s timeout per proof, where the
actual median of used time was only 0.01~s. It was applied to a tableau in cut
normal form that represents the proof tree of \ProverN's proof. The two
intermediate steps, translation of paramodulation to binary resolution and
expansion to cut normal form, succeeded in fractions of a second, except for
one case where the expansion took 121~s and two cases where it failed due to
memory exhaustion. The hyper conversion then failed in three further cases.
For all except two proofs the hyper conversion reduced the proof size, where
the overall median of the size ratio hyper-to-input was 0.39. See
\appref{app:exp} for details.

\section{Conclusion}
\label{sec-conclusion}

We conclude with discussing related work, open issues and perspectives. Our
interpolation method \CTIF \cite{cw:ipol} is complete for first-order logic
with function symbols. \Vampire's native interpolation
\cite{vampire:interpol:2010,vampire:interpol:2012}, targeted at verification,
is like all local methods incomplete \cite{kovacs:17}. \Princess
\cite{princess08,brillout:etal:beyond:2011} implements interpolation with a
sequent calculus that supports theories for verification and permits
uninterpreted predicates and functions. Suitable proofs for our approach can
currently be obtained from \CMProver (clausal tableaux) and \ProverN
(resolution/paramodulation). With optimized settings, \Vampire \cite{vampire}
and \EProver \cite{eprover} as of today only output proofs with gaps. This
seems to improve \cite{schulz:prague:2023} or might be overcome by re-proving
with \ProverN using lemmas from the more powerful systems.

So far we did not address special handling of equality in the context of
range-restriction, a topic on its own, e.g., \cite{vgt,bumg}. We treat it as
predicate, with axioms for reflexivity, symmetry, transitivity and
substitutivity. \CTIF works smoothly with these, respecting polarity
constraints of equality in interpolants \cite[Sect.~10.4]{cw:ipol}. With
exception of reflexivity these axioms are U-range-restricted. We do not
interfere with the provers' equality handling and just translate in finished
proofs paramodulation into binary resolution with substitutivity axioms.

The potential bottleneck of conversion to clausal form in \CTIF may be
remedied with structure-preserving (aka \name{definitional}) normal forms
\cite{scott:twovars,tseitin,eder:def:85,plaisted:greenbaum}.

Our \name{hyper} property might be of interest for proof presentation and
exchange, since it gives the proof tree a constrained shape and in experiments
often shortens it. Like hyperresolution and hypertableaux it can be
generalized to take a ``semantics'' into account \cite{slagle:1967}
\cite[Chap.~6]{chang:lee} \cite[Sect.~4.5]{handbook:ar:haehnle}. To shorten
interpolants, it might be combined with proof reductions (e.g.,
\cite{cwwb:lukas:2021}).

For query reformulation, interpolation on the basis of general first-order ATP
was so far hardly considered. Most methods are sequent calculi
\cite{toman:wedell:book,bwp:pods:2023} or analytic tableaux systems
\cite{franconi:2013,toman:2015:tableaux,benedikt:book,toman:2017}.
Experiments with ATP systems and propositional inputs indicate that
requirements are quite different from those in verification
\cite{benedikt:2017}.
An implemented system \cite{toman:2015:tableaux,toman:2017} uses analytic
tableaux with dedicated refinements for enumerating alternate
proofs/interpo\-lants corresponding to query plans for heuristic choice. In
\cite{benedikt:book} the focus is on interpolants that are sentences
respecting binding patterns, which, like range-restriction, ensures database
evaluability.
Our interpolation theorems show fine-grained conditions for passing variations
of range-restriction and the Horn property on to interpolants.
Matching these with the many formula classes considered in knowledge
representation and databases is an issue for future work. A further open topic
is adapting recent synthesis techniques for nested relations
\cite{bwp:pods:2023} to the clausal tableaux proof system.

Methodically, we exemplified a way to approach operations on proof structures
while taking efficient automated first-order provers into account. Feasible
implementations are brought within reach, for practical application and also
for validating abstract claims and conjectures with scrutiny. The prover is a
black box, given freedom on optimizations, strategy and even calculus. For
interfacing, the overall setting incorporates clausification and
Skolemization. Requirements on the proof structure do not hamper proof search,
but are ensured by transformations applied to proofs returned by the efficient
systems.

\subsubsection{Acknowledgments.}
 The author thanks Michael Benedikt for bringing the subtleties of
 range-restriction in databases to attention, Cécilia Pradic for insights into
 subtleties of proof theory, and anonymous reviewers for helpful suggestions
 to improve the presentation.

\bibliographystyle{splncs04}
\bibliography{bibrange01}

\clearpage
\appendix

\section{Proof of Lemma~\ref{lem-inv-x}}
\label{app:proof:lem:inv:x}

This appendix supplements Sect.~\ref{sec-proving-rr} with the proof of
Lemma~\ref{lem-inv-x}, used for proving Theorem~\ref{thm-ipol-vgt-vx} on
range-restricted interpolation with free variables.

\begin{lemref}{}{\ref{lem-inv-all}}
  Let $\ipctx$ be an interpolation context and assume a leaf-closed and hyper
  two-sided clausal ground tableau for $\FC$ and $\NGC$.

  \setcounter{refsub}{2}
  
  \slabref{lem-inv-x} If $\lnot G$ is \unirred and conditions~(1)--(3)
  Theorem~\ref{thm-ipol-vgt-vx} hold, then for all inner nodes $N$ the
  property $\INVX{N}$ defined as follows holds: $\INVX{N}$ $\eqdef$ For all
  conjunctive clauses $D$ in $\dnf{\nipol{N}}$ it holds that $\VC \subseteq
  \vtmaxpos{D} \cup \vtmaxpos{\npath{\RR}{N}}.$
\end{lemref}

\smallskip

The proof of Lemma~\ref{lem-inv-x} proceeds by induction on the tableau
structure, referring to DNF conversions of the interpolant constituents,
similarly to the proof of Lemma~\ref{lem-inv-d}, but with a base case that
resides on the following lemma:
\begin{lem}%
  \label{lem-above-negative-clause}
  For all inner nodes $N$ of a closed tableau that is hyper it holds that
  either all literals in $\nclause{N}$ are negative or $N$ has a descendant
  $N'$ such that all literals in $\nclause{N'}$ are negative.
\end{lem}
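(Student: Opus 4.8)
The plan is to argue by descending through the tableau along a carefully chosen branch, using the hyper property and leaf-closedness. First I would fix an inner node $N$ of a closed, hyper tableau. If all literals in $\nclause{N}$ are already negative, there is nothing to prove, so assume $\nclause{N}$ contains a positive literal. In a hyper tableau, the nodes labelled with a negative literal are exactly the leaves; equivalently, every inner node carries a positive literal label. So I would look at the children of $N$: since $\nclause{N}$ has a positive literal, $N$ has at least one child $N_1$ with a positive literal label, and that child $N_1$ is therefore an inner node (it is not a leaf, because leaves have negative labels). Then $N_1$ itself is an inner node, and I repeat the argument: if $\nclause{N_1}$ is all-negative we are done (with $N' = N_1$), otherwise $N_1$ has a child with a positive literal that is again an inner node.

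The key step is to see that this descent cannot continue forever along positive-literal inner nodes. Since the tableau is finite, any path of strictly descending nodes terminates; a maximal such path ends at a node $N'$ that is an inner node (hence with positive literal label) but such that $\nclause{N'}$ has \emph{no} positive literal — i.e.\ all literals in $\nclause{N'}$ are negative — because otherwise $N'$ would have a further child with a positive label that is an inner node, contradicting maximality. I would also have to rule out the degenerate possibility that the descent reaches a leaf: but leaves are exactly the negative-literal nodes, and at every stage the node we descend to was selected to be an inner node (a child with a positive literal is never a leaf in a hyper tableau), so the descent stays among inner nodes until it stops at the desired $N'$. This $N'$ is a descendant of $N$ (or $N$ itself in the first branch of the case split, already handled), and $\nclause{N'}$ is all-negative, as required.

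I do not expect a serious obstacle here; the statement is essentially a structural consequence of the definition of \emph{hyper}. The one point that needs a clean formulation is the termination of the descent and the observation that a non-leaf node all of whose children would have to be leaves has an all-negative clause — here one uses that a child with positive literal would be an inner node, so if $\nclause{N'}$ had any positive literal we could descend further. A remark worth making explicitly in the write-up is the edge case where $N'$ is a leaf's parent whose single clause has exactly one positive and possibly several negative literals: then the positive child is an inner node and we keep going, so such $N'$ is not where the descent stops. With finiteness of the tableau this pins down $N'$ uniquely as the first node on the chosen path whose clause is all-negative, completing the proof. \qed
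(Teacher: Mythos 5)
Your argument is correct and is essentially the paper's own proof: the paper observes in one line that leaves are exactly the negative-literal nodes, so failure of the claim would force an infinite branch, which is precisely the finiteness-of-descent argument you spell out. No gaps; your write-up is just a more explicit rendering of the same idea.
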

\begin{proof} The tableau is hyper, hence leaves are exactly the nodes with
  a negative literal label. Failure of the claimed property would imply that
  the tableau has an infinite branch. \qed
\end{proof}

\begin{proof}[Lemma~\ref{lem-inv-x}]
  \prlReset{lem-inv-x} By induction on the tableau structure, with nodes~$N$
  where all literals in $\nclause{N}$ are negative as base case. That this is
  sufficient as base case to show $\INVX{N}$ for all \emph{inner} nodes $N$ as
  claimed follows from Lemma~\ref{lem-above-negative-clause}.

  \textit{Base case where all literals in $\nclause{N}$ are negative.}  By
  Lemma~\ref{lem-x:i:noneg} the children of $N$ have side $\RR$. Hence
\begin{enumerate}[left=\prenum]
\item[\prl{pr:x:base}] $\nipol{N} = \bigwedge_{i=1}^{n} \nipol{N_i}.$
\end{enumerate}
For all children $N'$ of $N$ where the side of $\ntgt{N'}$ is $\LL$ it holds
that $\nipol{N'} = \nlit{\ntgt{N'}}$, a positive literal, hence
$\vtmax{\nlit{N'}} = \vtmaxpos{\nipol{N'}}$.  For all children $N'$ of $N$ where
the side of $\ntgt{N'}$ is $\RR$ it holds that $\vtmax{\nlit{N'}} \subseteq
\vtmaxpos{\npath{\RR}{N}}$.  With \pref{pr:x:base} it follows that
\begin{enumerate}[left=\prenum]
\item[\prl{pr:x:base:a}] $\vtmax{\nclause{N}} \subseteq \vtmaxpos{\nipol{N}} \cup
  \vtmaxpos{\npath{\RR}{N}}$.
\end{enumerate}
By Lemma~\ref{lem-x:i:neg} it holds that $\VC \subseteq \vtmax{\nclause{N}}$.
With \pref{pr:x:base:a} it follows that
\begin{enumerate}[left=\prenum]
\item[\prl{pr:x:base:b}] $\VC \subseteq \vtmaxpos{\nipol{N}} \cup
  \vtmaxpos{\npath{\RR}{N}}$.
\end{enumerate}
Because $\nipol{N}$ is a conjunction of literals and $\true$, the formula
$\dnf{\nipol{N}}$ consists of a single conjunctive clause~$D$, which contains
exactly the literals in $\nipol{N}$ and thus satisfies $\vtmaxpos{D} =
\vtmaxpos{\nipol{N}}$. Together with \pref{pr:x:base:b} this implies
$\INVX{N}$.

\textit{Induction Step.}  Let $N_1, \ldots, N_n$, where $1 \leq n$, be the
children of $N$.  Assume as induction hypothesis that for $i \in
\{1,\ldots,n\}$ it holds that $\INVX{N_i}$.  Consider the case where the side
of the children is $\RR$.  Then
    \begin{enumerate}[left=\prenum]
    \item[\prl{pr:x:n:disj}] $\nipol{N} = \bigwedge_{i = 1}^{n} \nipol{N_i}$.
    \end{enumerate}
    Assume that $\INVX{N}$ does not hold.  Then there exists a conjunctive
    clause $K$ in $\dnf{\nipol{N}}$ and a term $t$ such that
    \begin{enumerate}[left=\prenum]
    \item[\prl{pr:x:ni1:universal}] $t \in \VC$.
    \item[\prl{pr:x:ni1:litnew}] $t \notin \vtmaxpos{K}$.
    \item[\prl{pr:x:ni1:brnew}]  $t \notin \vtmaxpos{\npath{\RR}{N}}$.
\end{enumerate}
    To derive a contradiction, we first show that given
    \pref{pr:x:ni1:universal}--\pref{pr:x:ni1:brnew} it holds that
\begin{enumerate}[left=\prenum]
\item[\prl{pr:x:lem:children}] For all children $N'$ of $N$:
  $t \notin \vtmaxpos{\npath{\RR}{N'}}$.
\end{enumerate}
Statement~\pref{pr:x:lem:children} can be shown as follows. Assume to the
contrary that there is a child $N'$ of $N$ such that $t \in
\vtmaxpos{\npath{\RR}{N'}}$. By \pref{pr:x:ni1:brnew} it follows that $t \in
\vtmax{\nlit{N'}}$ and $\nlit{N'}$ is positive. By Lemma~\ref{lem-x:i:x}
and~\pref{pr:x:ni1:universal} there is another child $N''$ of $N$ such that
$\nlit{N''}$ is negative and $t \in \vtmax{\nlit{N''}}$. Since the tableau is
closed, it follows from \pref{pr:x:ni1:brnew} that $\ntgt{N''}$ has side
$\LL$, which implies that $\nipol{N''} = \nlit{\ntgt{N''}}$. Hence $t \in
\vtmax{\nipol{N''}}$. Since $\nipol{N''}$ is a positive literal and a conjunct
of $\nipol{N}$, it follows from \pref{pr:x:n:conj} and
Prop.~\ref{prop-dnf-unit} that for all conjunctive clauses $D$ in
$\dnf{\nipol{N}}$ it holds that $t \in \vtmaxpos{D}$, contradicting
assumption~\pref{pr:x:ni1:litnew}. Hence~\pref{pr:x:lem:children} must hold.

From \pref{pr:x:lem:children}, \pref{pr:x:ni1:universal} and the induction
hypothesis it follows that for all children~$N'$ of~$N$ and conjunctive
clauses~$D'$ in $\dnf{\nipol{N'}}$ it holds that $t \in \vtmaxpos{D'}$. Hence,
by~\pref{pr:x:n:disj} and Prop.~\ref{prop-drr-preserve-terms} it follows that
for all clauses~$D$ in $\dnf{\nipol{N}}$ it holds that $t \in \vtmaxpos{D}$.
This, however, contradicts our assumption of the existence of a conjunctive
clause $K$ in $\dnf{\nipol{N}}$ that satisfies~\pref{pr:x:ni1:litnew}. Hence
$\INVX{N}$ must hold.

We conclude the proof of the induction step for $\INVX{N}$ by considering the
case where the side of the children of $N$ is $\LL$. Then
\begin{enumerate}[left=\prenum]
\item[\prl{pr:x:n:conj}] $\nipol{N} = \bigvee_{i = 1}^{n}  \nipol{N_i}$.
\item[\prl{pr:x:i2:aux}] For all children $N'$ of $N$: $\npath{\RR}{N} =
  \npath{\RR}{N'}$.
\end{enumerate}
$\INVX{N}$ follows from the induction hypothesis, \pref{pr:x:i2:aux},
\pref{pr:x:n:conj} and Prop.~\ref{prop-dnf-sub}.
\qed
\end{proof}

\section{Proof of Theorem~\ref{thm-ipol-horn}}
\label{app:horn}

This appendix supplements Sect.~\ref{sec-horn} with a proof of the
central claim there, Theorem~\ref{thm-ipol-horn}.

\begin{thmref}{Interpolation from a Horn Formula}{\ref{thm-ipol-horn}}
  Let $F$ be a Horn formula and let $G$ be a formula such that $F\entails G.$
  Then there exists a Craig-Lyndon interpolant~$H$ of $F$ and $G$ that is a
  Horn formula. Moreover, $H$
  can be effectively constructed from a clausal tableau proof
  of $F \entails G$.
\end{thmref}

\begin{proof}
  We use some additional terminology: A \defname{negative clause} is a clause
  with only negative literals. A \defname{Horn-like formula} is an NNF
  inductively defined as a literal, $\true$, $\false$, a conjunction of
  Horn-like formulas, or a disjunction of negative literals, $\false$ and at
  most one Horn-like formula. It is easy to see that a Horn-like formula can
  be converted to an equivalent conjunction of Horn clauses by truth-value
  simplification and distributing disjunction upon conjunction.

  The claimed Horn interpolant is obtained via \CTIF (Fig.~\ref{fig-ctif}),
  assuming w.l.o.g. that the CNF computed in step~\ref{step-pre} there meets
  the requirement of Sect.~\ref{sec-cnf-dnf}, and that the closed clausal
  tableau computed in step~\ref{step-tab} is leaf-closed and has the hyper
  property. That \CTIF constructs a Craig-Lyndon interpolant has been shown in
  \cite{cw:ipol}. It remains to show that it can be converted to a Horn
  formula. Let $\ipctxprefix$ be the prefix of an interpolation context for
  the input formulas~$F$ and $G$ and let $N_0$ be the root of the tableau
  computed in step~\ref{step-tab}.

  We now show by induction on the tableau structure that $\HG = \nipol{N_0}$
  for the tableau root~$N_0$ is a Horn-like formula. The theorem then follows
  since we can obtain the final interpolant $H$ from $\HG$ by interpolant
  lifting (step~\ref{step-lifting} of \CTIF), replacing placeholder constants
  with free variables (step~\ref{step-constants-to-vars}), and conversion of
  the Horn-like matrix to an equivalent conjunction of Horn clauses, where all
  syntactic properties relevant for a Craig-Lyndon interpolant are preserved.

  For the base case where $N$ is a leaf it is immediate from the definition of
  $\f{ipol}$ that $\nipol{N}$ is a ground literal or a truth value constant
  and thus a Horn-like formula. To show the induction step, let $N$ be an
  inner node with children $N_1,\ldots,N_n$ where $n \geq 1$. As induction
  hypothesis assume that for all $i \in \{1,\ldots,n\}$ it holds that
  $\nipol{N_i}$ is a Horn-like formula. We prove the induction step by showing
  that then also $\nipol{N}$ is a Horn-like formula.
  \begin{itemize}
  \item Case $\nside{N_1} = \aaa$. We consider two subcases.
      \begin{itemize}
      \item Case $\nclause{N}$ is negative. For all $i \in \{1,\ldots,n\}$ the
        formula $\nipol{N_i}$ is a negative literal or $\false$. Hence
        $\nipol{N} = \bigvee_{i=1}^n \nipol{N_i}$ is Horn-like.
      \item Case $\nclause{N}$ is not negative. Since $F'$ is Horn and
        $\nclause{N}$ is an instance of a clause in $F'$, $\nclause{N}$ has
        exactly one child whose literal label is positive. Let $N_j$ with $j
        \in \{1,\ldots,n\}$ be that child. By the induction hypothesis
        $\nipol{N_j}$ is Horn-like. For all $i \in \{1,\ldots,n\} \setminus
        \{j\}$ the formula $\nipol{N_i}$ is a negative literal or $\false$.
        Hence $\nipol{N} = \bigvee_{i=1}^n \nipol{N_i}$ is Horn-like.
      \end{itemize}
    \item Case $\nside{N_1} = \bbb$. From the induction hypothesis it follows
      that $\nipol{N} = \bigwedge_{i=1}^n \nipol{N_i}$ is Horn-like. \qed
  \end{itemize}
\end{proof}

\section{Termination of the Hyper Conversion}
\label{app:proc:hyper:terminates}

This appendix supplements Sect.~\ref{sec-hyper-convert} with a proven
statement on the termination of the hyper conversion procedure
(Fig.~\ref{fig-proc-hyper}).

\begin{prop}%
\label{proc-hyper-terminates}
The hyper conversion procedure (Fig.~\ref{fig-proc-hyper}) terminates.
\end{prop}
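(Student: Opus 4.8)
The plan is to attach to every closed, regular, leaf-closing clausal tableau a value $\mu$ in a well-founded order and to show that $\mu$ strictly decreases with each execution of the loop body of Fig.~\ref{fig-proc-hyper}. The loop is entered only while the current tableau is not hyper, and, by the properties of simplification recalled in Sect.~\ref{sec-tableaux} (used in step~\ref{step-proc-simp}), each round starts and ends with a closed, regular, leaf-closing tableau whose clauses are clauses of the original one; hence a strictly decreasing $\mu$ bounds the number of rounds, and at exit the tableau is hyper. So the argument splits into describing one round and exhibiting~$\mu$.

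For the first part I would record facts about the nodes $N^\prime$ and $N$ selected in step~\ref{step-proc-pick}. Since $N^\prime$ is the \emph{first} node in pre-order that has an inner node with a negative literal as a child, every node visited strictly before $N^\prime$ -- in particular every ancestor of $N^\prime$ -- is a leaf or carries a positive literal (otherwise its parent would be an earlier node of the kind selected), so $N$ has no ancestor with a negative literal. Combining this with regularity and leaf-closingness, every leaf $M$ with $\nlit{M}=\du{\nlit{N}}$ that is a descendant of $N^\prime$ after step~\ref{step-proc-attach} has $N$ as its unique ancestor with complementary literal. This is exactly why grafting a fresh copy of $U$ below each such $M$ in step~\ref{step-proc-fix} re-closes precisely the branches that were closed ``through $N$'', leaves all other branches closed, and hence makes the round produce again a closed tableau whose clauses are clauses of the input. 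These bookkeeping claims are routine but need to be written out.

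The core, and the step I expect to be the main obstacle, is the design of $\mu$. The difficulty is the duplication in step~\ref{step-proc-fix}: the tree $U$ is a copy of the subtree at $N^\prime$ with $N$'s subtree pruned, and it may itself contain inner nodes with negative literals -- namely those occurring in the subtrees of the siblings of $N$ -- and since $N$ can have several descendant leaves labelled $\du{\nlit{N}}$, such a sibling branch may be copied several times. Consequently a single round need not decrease the number of inner negative-literal nodes, and likewise no measure built as a sum, or a Dershowitz--Manna multiset, of the subtree sizes of such nodes decreases in general. What does improve is local and near the root: after a round the literal $\nlit{N}$ labels no inner node of the new subtree at $N^\prime$ (the occurrence at $N$ is deleted, and any copied inner node labelled $\nlit{N}$ becomes closing against the ancestor $M$ and is pruned to a leaf by step~\ref{step-proc-simp}), while every inner negative-literal node that was not already present before the round lies strictly below the former position of $N$. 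I would therefore take $\mu$ to be lexicographically ordered (equivalently, an ordinal), its leading components quantifying ``how close to the root negative literals still sit at inner nodes'', chosen so that the finitely many copies created strictly below $N$ affect only later components; then a case analysis over the two cases of step~\ref{step-proc-pick} (by the side of the children of $N$) shows that some component strictly decreases while no earlier one increases. Designing the leading components so that they provably never increase, despite $N$'s children being reparented up to the level of $N^\prime$, is the delicate point; a plausible realization combines an ordinal layering of the inner negative-literal nodes (giving deeper-lying badness a dominated, lower-order weight) with a secondary size or multiset component handling rounds that merely move the ``bad frontier'' downward. As $\mu$ ranges over a well-founded order, the sequence of its values over successive rounds is finite, so the procedure terminates. \qed
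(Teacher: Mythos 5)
Your overall strategy (a well-founded measure that strictly decreases per round) is the same as the paper's, and your analysis of a single round is essentially right: the modifications are confined to the subtree at $N'$, the label $\nlit{N}$ no longer occurs at inner nodes of that subtree afterwards, and the copied material that may reintroduce inner nodes with negative literals sits strictly deeper. But the proof has a genuine gap exactly where you flag ``the delicate point'': you never exhibit the measure. The realization you gesture at --- an ordinal layering of inner negative-literal \emph{nodes} by depth, backed by a secondary size or multiset component --- is not shown to work, and it is not clear that it can be made to work: step~\ref{step-proc-fix} may create arbitrarily many new such nodes at many different depths in a single round, and you give no argument that your unspecified leading components are unaffected by this. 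Since constructing the measure \emph{is} the content of the termination proof, what remains is a plan rather than a proof.

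The paper's measure sidesteps the duplication problem by counting \emph{distinct negative literal labels} rather than nodes: $\nbadlits{N'}$ is the \emph{set} of negative literals labelling inner descendants of $N'$, and copying cannot enlarge a set of labels --- the only label that newly appears at inner descendants of $N'$ is $\du{\nlit{N}}$ (the former leaves $M$ become inner nodes), which is positive, while $\nlit{N}$ itself is eliminated from the set. Hence $|\nbadlits{N'}|$ strictly decreases for the current $N'$. Because the next round may select a different node, this local count is prefixed with a positional code $\ncode{N'}$ (the numbers of right siblings along the root path, terminated by $\omega$), and the concatenation decreases lexicographically: by confinement of the changes to the subtree at $N'$ and the pre-order-minimal choice of $N'$, the node selected next is either the same node (same code, smaller count) or lies below it (lexicographically smaller code, since $\omega$ dominates). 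Regularity bounds the length of these strings, so the order is a well-order. This ``count labels, not nodes'' idea, together with the positional prefix, is precisely what is missing from your argument; without it (or a fully worked-out alternative) the termination claim is not established.
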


\begin{proof}
We give a measure that strictly decreases in each round of the procedure.
Consider a single round of the steps~1--5 of the procedure with $N$ and
$N^\prime$ as determined in step~1. We observe the following.
\begin{enumerate}[label={(\roman*)},leftmargin=2em]
\item \label{item-proc-hyper-allbelow} All tableau modifications made in
  the round are in the subtree rooted at~$N^\prime$.
\item \label{item-proc-hyper-toleaf} At finishing the round all descendants
  of $N^\prime$ with the same literal label as~$N$ are leaves.
\item \label{item-proc-hyper-nonewnonleaf} All literal labels of inner
  nodes that are descendants of $N^\prime$ and are different from
  $\du{\nlit{N}}$ at finishing the round were already literal labels of inner
  nodes that are descendants of~$N^\prime$ when entering the round.
\end{enumerate}
We can now specify the measure that strictly decreases in each round of the
procedure. For a node $N$ define $\nbadlits{N}$ as the set of literal labels
that occur in inner (i.e., non-leaf) descendants of~$N$ and are negative. From
the above observations~\ref{item-proc-hyper-toleaf}
and~\ref{item-proc-hyper-nonewnonleaf} it follows that for $N^\prime$ as
determined in step~1 of the procedure the cardinality of $\nbadlits{N^\prime}$
is strictly decreased in a round of steps~1--5 of the procedure. However, a
different node might be determined as~$N^\prime$ in step~1 of the next round.
To specify a globally decreasing measure we define a further auxiliary notion:
Let $N_n$ be a node whose ancestors are in root-to-leaf order the nodes
$N_1,\ldots,N_{n-1}$. Define $\ncode{N_n}$ as the string $I_1\ldots I_n
\omega$ of numbers, where for $i \in \{1,\ldots,n\}$ the number $I_i$ is the
number of right siblings of $N_i$. With
observation~\ref{item-proc-hyper-allbelow} it then follows that the following
string of numbers, determined in step~1 of a round, is strictly reduced from
round to round w.r.t. the lexicographical order of strings of numbers:
 \[\ncode{N^\prime}|\nbadlits{N^\prime}|.\]
Regularity ensures that the length of the strings to be considered can not be
larger than the finite number of literal labels of nodes of the input tableau
plus~$3$ (a leading $0$ for the root, which has no literal label; $\omega$;
and $|\nbadlits{N^\prime}|$).  With the lexicographical order restricted to
strings up to that length we have a well-order and the strict reduction
ensures termination.
\qed
\end{proof}

\section{Experimental Indicators of Practical Feasibility}
\label{app:exp}

This appendix supplements Sect.~\ref{sec-hyper-convert} with details on the
experiments to verify practical feasibility of proof conversions involved in
our strengthened variations of Craig interpolation. Also instructions for
reproducing the experiments are given.
An implementation of the techniques from the paper is currently in progress,
written in SWI-Prolog \cite{swiprolog}, embedded in \PIE
\cite{cw:pie:2016,cw:pie:2020}. Core parts of the functionality are already
available\footnote{Interpolation with the \CTIF method (\PIE module
  \name{craigtask\_cm}); resolution proof translation and hyper conversion
  (module \name{ctrp}); conversion of VGT-range-restricted formulas to
  \name{allowed} \cite{vgt} formulas (module \name{rr\_support})} but not yet
integrated into full application workflows.

The involved proof transformations lead from a proof with resolution and
paramodulation via pure binary resolution and a clausal tableau in cut normal
form to a clausal tableau with the hyper property. To get an impression of
their practical feasibility, we tested them on problems from the latest
\name{CASC} competition, \name{CASC-J11} \cite{casc:2022}, as an unbiased set
of proofs of miscellaneous problems.

As basis we took those FOF problems of \name{CASC-J11} on which \ProverN
succeeded in the competition. We tried to reprove these in \ProverN's default
\name{auto} mode\footnote{\ProverN's \name{auto2} mode used in the
\name{CASC-J11} competition leads to some proofs with the \name{new\_symbol}
rule that so far can not be translated.} via \PIE and to convert their proofs
with the \Prooftrans tool, which comes with \ProverN, with a timeout of 400~s
per problem.\footnote{On a Linux notebook with 12th Gen
Intel\textsuperscript{\textregistered} Core\texttrademark\ i7-1260P CPU and
32~GB~RAM.} \Prooftrans was configured with the \name{expand} option that
translates \ProverN's proofs to just binary resolution, paramodulation and a
few other equality rules. This succeeded for 112 problems. For one additional
problem, \Prooftrans failed.\footnote{LCL664+1.005, \textit{Fatal error:
set\_vars\_recurse: max\_vars}.} The length of the obtained proofs (number of
steps, including axioms) was between 12 and 919 with median~55.

Equality-specific rules were then translated to binary resolution steps, which
for no proof took longer than 0.04~s. The proof length (number of steps,
including axioms) of the results was between 10 and 4,833, median 81 (in some
cases the size decreased because non-clausal axioms were deleted).

These proofs were then converted to clausal tableaux in cut normal form that
correspond to resolution trees. This failed for two of the 112 proofs due to
memory exhaustion but succeeded for each of the others in less then 0.1~s,
with exception of one problem, where it took~121~s. The median time per
problem was 0.001~s. The proof size (number of inner nodes of the clausal
tableau) in the results was between 20 and 97,866,317, median~259.

Finally the hyper conversion was applied with a timeout of 400~s per proof to
the remaining 110 proofs. It succeeded for 107 proofs, within a median time of
0.01~s per proof, and a maximum time of 235~s. One failure was due to memory
exhaustion, the other two were timeouts. The proof size (number of inner nodes
of the clausal tableau) of the results was between 11 and 3,110, median~77. In
105 of the 107 cases the size was reduced. The largest proof on which the
conversion succeeded had size 51,359 and was reduced to size 507. The ratios
of the size of the hyper-converted tableau to the size of the source tableau
were between 0.01 and 4.48, median 0.39.

Tables~\ref{tab-conv-1}--\ref{tab-conv-3} show result data for each of the
112~problems. Figure~\ref{fig-exp-code} shows Prolog code to reproduce the
experiments with \PIE. The columns in these tables are as follows.

\begin{description}
\item{\textbf{Problem}} The TPTP problem (TPTP v8.1.2).
\item{\textbf{Rtg}} Its latest rating in TPTP v8.1.2.
  \item{\textbf{T1}} Proving time in seconds, rounded (timeout 400~s).
  \item{\textbf{S1}} Number of steps of the original proof, after expansion
    into binary resolution and paramodulation by \Prooftrans, including
    axioms.
  \item{\textbf{S2}} Number of steps of the binary resolution proof after
    translation of paramodulation and some other equality inferences to pure
    binary resolution, including axioms.
  \item{\textbf{S3}} Tree size (number of inner nodes) of the clausal tableau
    in cut normal form, where ``--'' indicates failure of the conversion.
  \item{\textbf{S4}} Tree size (number of inner nodes) of the clausal tableau
    after the hyper conversion, where ``--'' indicates a timeout (400~s) of
    the conversion.
  \item{\textbf{T2}} Time for the hyper conversion in seconds, rounded.
\end{description}

\begin{table}
\centering\small
  \caption{Conversion data for CASC-J11 problems (I/III).}
  \label{tab-conv-1}
\begin{tabular}{L{7em}R{2em}R{2em}R{4em}R{4em}R{5em}R{4em}R{2em}}
  \textbf{Problem} & \textbf{Rtg} & \textbf{T1} & \textbf{S1} &
  \textbf{S2} & \textbf{S3} & \textbf{S4} & \textbf{T2}\\\midrule
ALG019+1 & 0.25 & 0 & 28 & 52 & 129 & 36 & 0\\
ALG210+2 & 0.25 & 0 & 43 & 97 & 305 & 149 & 0\\
ALG219+1 & 0.28 & 1 & 115 & 209 & 305 & 120 & 0\\
COM008+2 & 0.53 & 5 & 60 & 84 & 857 & 104 & 0\\
CSR007+1 & 0.53 & 0 & 217 & 327 & 5,078 & 1,743 & 9\\
CSR008+1 & 0.50 & 0 & 247 & 377 & 8,912 & 2,023 & 21\\
CSR024+1.009 & 0.47 & 1 & 166 & 250 & 865 & 270 & 0\\
CSR024+1.010 & 0.47 & 0 & 179 & 259 & 801 & 265 & 0\\
GEO082+1 & 0.22 & 0 & 22 & 37 & 47 & 22 & 0\\
GEO083+1 & 0.42 & 6 & 24 & 43 & 57 & 27 & 0\\
GEO094+1 & 0.36 & 0 & 40 & 34 & 101 & 40 & 0\\
GEO112+1 & 0.31 & 0 & 21 & 54 & 49 & 21 & 0\\
GEO125+1 & 0.25 & 0 & 29 & 68 & 65 & 27 & 0\\
GEO126+1 & 0.42 & 0 & 66 & 132 & 388 & 372 & 0\\
GEO127+1 & 0.47 & 0 & 92 & 151 & 2,069 & 341 & 0\\
GEO499+1 & 0.39 & 1 & 12 & 28 & 33 & 17 & 0\\
GEO524+1 & 0.39 & 0 & 13 & 27 & 45 & 21 & 0\\
GEO526+1 & 0.56 & 8 & 30 & 60 & 73 & 29 & 0\\
GEO541+1 & 0.33 & 2 & 92 & 73 & 257 & 93 & 0\\
GEO555+1 & 0.39 & 3 & 79 & 65 & 601 & 77 & 0\\
GEO557+1 & 0.39 & 2 & 66 & 52 & 1,203 & 111 & 0\\
GRA007+2 & 0.50 & 399 & 101 & 145 & 1,545 & 331 & 0\\
GRP655+2 & 0.54 & 79 & 919 & 4,833 & -- & -- & --\\
GRP711+1 & 0.21 & 0 & 53 & 116 & 1,001 & 269 & 0\\
GRP720+1 & 0.71 & 7 & 158 & 469 & 10,041 & 2,271 & 183\\
GRP746+1 & 0.58 & 70 & 795 & 2,443 & 97,866,317 & -- & --\\
\end{tabular}
\end{table}

\begin{table}
  \centering\small
  \caption{Conversion data for CASC-J11 problems (II/III).}
  \label{tab-conv-2}
\begin{tabular}{L{7em}R{2em}R{2em}R{4em}R{4em}R{5em}R{4em}R{2em}}
  \textbf{Problem} & \textbf{Rtg} & \textbf{T1} & \textbf{S1} &
  \textbf{S2} & \textbf{S3} & \textbf{S4} & \textbf{T2}\\\midrule
GRP747+1 & 0.25 & 12 & 28 & 59 & 636 & 2,850 & 8\\
GRP779+1 & 0.72 & 6 & 267 & 655 & 11,101 & 1,742 & 105\\
ITP015+1 & 0.44 & 0 & 70 & 250 & 805 & 339 & 1\\
ITP020+1 & 0.42 & 0 & 20 & 67 & 97 & 34 & 0\\
ITP023+1 & 0.36 & 0 & 38 & 105 & 261 & 94 & 0\\
KLE041+1 & 0.44 & 2 & 77 & 134 & 669 & 240 & 0\\
KLE170+1.002 & 0.39 & 0 & 62 & 125 & 2,409 & 220 & 0\\
KRS188+1 & 0.33 & 16 & 16 & 13 & 33 & 17 & 0\\
KRS191+1 & 0.33 & 0 & 13 & 10 & 21 & 11 & 0\\
KRS193+1 & 0.33 & 0 & 13 & 10 & 21 & 11 & 0\\
KRS194+1 & 0.33 & 0 & 13 & 10 & 21 & 11 & 0\\
KRS196+1 & 0.33 & 0 & 13 & 10 & 21 & 11 & 0\\
KRS202+1 & 0.27 & 0 & 45 & 38 & 94 & 48 & 0\\
KRS203+1 & 0.27 & 0 & 35 & 29 & 66 & 34 & 0\\
KRS216+1 & 0.40 & 17 & 21 & 15 & 30 & 17 & 0\\
KRS217+1 & 0.27 & 1 & 32 & 25 & 54 & 27 & 0\\
KRS234+1 & 0.33 & 0 & 19 & 15 & 37 & 16 & 0\\
KRS235+1 & 0.33 & 1 & 19 & 15 & 37 & 16 & 0\\
LCL456+1 & 0.25 & 0 & 25 & 20 & 45 & 22 & 0\\
LCL485+1 & 0.58 & 20 & 96 & 145 & 749 & 307 & 0\\
LCL509+1 & 0.56 & 11 & 99 & 154 & 8,609 & 1,926 & 61\\
LCL549+1 & 0.61 & 150 & 183 & 242 & 1,195 & 1,120 & 5\\
MGT034+2 & 0.22 & 0 & 101 & 143 & 1,306 & 116 & 0\\
MGT039+2 & 0.25 & 0 & 422 & 502 & -- & -- & --\\
MGT042+1 & 0.22 & 0 & 89 & 137 & 667 & 268 & 0\\
MGT061+1 & 0.22 & 0 & 95 & 141 & 529 & 142 & 0\\
MGT065+1 & 0.28 & 0 & 111 & 165 & 3,071 & 556 & 1\\
MGT067+1 & 0.27 & 0 & 58 & 51 & 177 & 64 & 0\\
NUM317+1 & 0.64 & 92 & 27 & 24 & 49 & 25 & 0\\
NUM323+1 & 0.53 & 90 & 18 & 15 & 29 & 15 & 0\\
NUM329+1 & 0.33 & 1 & 15 & 12 & 20 & 12 & 0\\
NUM568+3 & 0.22 & 3 & 15 & 28 & 41 & 17 & 0\\
NUM613+3 & 0.25 & 6 & 58 & 86 & 189 & 81 & 0\\
NUM925+5 & 0.25 & 179 & 90 & 205 & 1,081 & 213 & 0\\
NUM926+2 & 0.47 & 0 & 18 & 50 & 29 & 15 & 0\\
NUM926+6 & 0.44 & 0 & 33 & 87 & 97 & 43 & 0\\
NUM926+7 & 0.47 & 4 & 28 & 80 & 77 & 37 & 0\\
NUN057+2 & 0.42 & 1 & 57 & 106 & 508 & 152 & 0\\
NUN060+1 & 0.27 & 0 & 18 & 14 & 33 & 17 & 0\\
NUN062+1 & 0.27 & 2 & 33 & 24 & 61 & 26 & 0\\
NUN066+2 & 0.22 & 0 & 111 & 194 & 3,061 & 3,110 & 55\\
NUN068+2 & 0.22 & 0 & 91 & 163 & 2,692 & 2,268 & 19\\
NUN072+2 & 0.44 & 3 & 92 & 188 & 961 & 276 & 0\\
\end{tabular}
\end{table}

\begin{table}
  \centering\small
  \caption{Conversion data for CASC-J11 problems (III/III).}
  \label{tab-conv-3}
\begin{tabular}{L{7em}R{2em}R{2em}R{4em}R{4em}R{5em}R{4em}R{2em}}
  \textbf{Problem} & \textbf{Rtg} & \textbf{T1} & \textbf{S1} &
  \textbf{S2} & \textbf{S3} & \textbf{S4} & \textbf{T2}\\\midrule
NUN076+1 & 0.27 & 0 & 30 & 24 & 61 & 31 & 0\\
NUN081+1 & 0.27 & 0 & 14 & 11 & 25 & 13 & 0\\
PRO002+3 & 0.47 & 3 & 145 & 294 & 4,291 & 329 & 0\\
PRO004+2 & 0.36 & 19 & 152 & 227 & 14,299 & 192 & 0\\
PRO009+1 & 0.28 & 3 & 194 & 287 & 4,655 & 491 & 1\\
PUZ001+2 & 0.31 & 0 & 43 & 100 & 173 & 62 & 0\\
PUZ078+1 & 0.25 & 1 & 200 & 238 & 51,359 & 507 & 4\\
PUZ133+1 & 0.44 & 0 & 112 & 177 & 1,505 & 468 & 1\\
REL025+1 & 0.50 & 3 & 295 & 740 & 85,513 & -- & --\\
REL050+1 & 0.29 & 1 & 158 & 373 & 30,213 & 2,219 & 235\\
RNG103+2 & 0.28 & 0 & 29 & 52 & 117 & 51 & 0\\
SET076+1 & 0.33 & 2 & 19 & 40 & 49 & 25 & 0\\
SET094+1 & 0.25 & 0 & 14 & 31 & 33 & 17 & 0\\
SET601+3 & 0.36 & 0 & 97 & 356 & 1,865 & 880 & 9\\
SET637+3 & 0.28 & 0 & 46 & 75 & 279 & 51 & 0\\
SET681+3 & 0.50 & 0 & 86 & 131 & 492 & 142 & 0\\
SET686+3 & 0.56 & 0 & 70 & 116 & 685 & 148 & 0\\
SEU187+2 & 0.36 & 9 & 86 & 179 & 729 & 179 & 0\\
SEU363+1 & 0.39 & 0 & 66 & 124 & 177 & 74 & 0\\
SWB013+2 & 0.28 & 44 & 57 & 74 & 129 & 52 & 0\\
SWC037+1 & 0.22 & 0 & 38 & 55 & 161 & 52 & 0\\
SWV161+1 & 0.33 & 2 & 25 & 58 & 77 & 30 & 0\\
SWV162+1 & 0.33 & 2 & 25 & 58 & 77 & 30 & 0\\
SWV167+1 & 0.25 & 3 & 29 & 54 & 109 & 36 & 0\\
SWV202+1 & 0.31 & 13 & 47 & 72 & 249 & 80 & 0\\
SWV203+1 & 0.31 & 13 & 47 & 72 & 249 & 80 & 0\\
SWV237+1 & 0.33 & 1 & 18 & 33 & 65 & 24 & 0\\
SWV401+1 & 0.31 & 1 & 47 & 81 & 289 & 114 & 0\\
SWV415+1 & 0.22 & 0 & 17 & 52 & 57 & 27 & 0\\
SWV417+1 & 0.25 & 0 & 17 & 49 & 37 & 13 & 0\\
SWV449+1 & 0.44 & 0 & 39 & 76 & 121 & 51 & 0\\
SWV451+1 & 0.44 & 16 & 77 & 165 & 1,122 & 176 & 0\\
SWV455+1 & 0.28 & 0 & 29 & 55 & 81 & 37 & 0\\
SWV457+1 & 0.42 & 0 & 91 & 164 & 1,829 & 132 & 0\\
SWV466+1 & 0.50 & 0 & 123 & 216 & 853 & 277 & 0\\
SWV470+1 & 0.67 & 0 & 77 & 143 & 497 & 74 & 0\\
SWV472+1 & 0.67 & 8 & 110 & 227 & 1,341 & 143 & 0\\
SWV477+1 & 0.42 & 25 & 104 & 177 & 3,393 & 307 & 0\\
SWV481+1 & 0.64 & 32 & 142 & 210 & 1,461 & 178 & 0\\
SWW229+1 & 0.31 & 2 & 18 & 72 & 73 & 21 & 0\\
SWW233+1 & 0.47 & 31 & 80 & 143 & 373 & 111 & 0\\
SWW473+1 & 0.25 & 14 & 16 & 81 & 57 & 19 & 0\\
SYN353+1 & 0.40 & 0 & 79 & 78 & 1,032 & -- & --\\
\end{tabular}
\end{table}

\begin{figure}
\begin{Verbatim}[fontsize=\footnotesize]
exp(Problem) :-
        getenv('PIE', PIE),
        format(atom(ScratchTPTP), '~w/scratch/scratch_tptp', [PIE]),
        consult(ScratchTPTP),
        get_time(T1),
        ppl_valid(tptp(Problem),
                  [mace=false,
                   proof=Proof,
                   prooftrans_options='expand renumber',
                   prooftrans_jterms,
                   r=true,
                   timeout=400]),
        get_time(T2),
        Proof = [proof(PSteps)],
        length(PSteps, L1),
        p9proof_install(Proof, [eq=binres_only]),
        current_p9proof(PureResolProof),
        length(PureResolProof, L2),
        get_time(T3),
        p9proof_to_ct(Proof, CT, [eq=binres_only]),
        get_time(T4),
        ct_tsize(CT, SCT),
        get_time(T5),
        ct_trafo_hyper(CT, CT1),
        get_time(T6),
        ct_tsize(CT1, SCT1),
        format('Problem: ~w~n', [Problem]),
        TP is T2-T1,
        format('Proving time: ~w~n', [TP]),
        format('Proof steps: ~w~n', [L1]),
        format('Pure binary resolution steps: ~w~n', [L2]),
        TC is T4-T3,
        format('Time for conversion to cut normal form tableau: ~w~n',
               [TC]),
        format('Tree size of cut normal form tableau: ~w~n', [SCT]),
        TH is T6-T5,
        format('Time for hyper conversion: ~w~n', [TH]),
        format('Tree size of hyper converted tableau: ~w~n', [SCT1]).
\end{Verbatim}
\caption{Code to run the described experiments in SWI-Prolog with \PIE. An
  example invocation would be \texttt{?- exp('ALG019-1').}}
\label{fig-exp-code}
\end{figure}

\closeout\plabelsfile
\closeout\plabelslogfile

\end{document}